\newcommand{\A}{\mathcal{A}}
\newcommand{\C}{\mathcal{C}}
\newcommand{\K}{\mathcal{K}}
\newcommand{\wone}{\mathsf{W[1]}}
\newcommand{\np}{\mathsf{NP}}
\newcommand{\conp}{\mathsf{coNP}}
\newcommand{\cupdot}{\mathbin{\mathaccent\cdot\cup}}
\newtheorem{theorem}{Theorem}
\newtheorem{lemma}{Lemma}
\newproof{proof}{Proof}
\begin{document}

\begin{frontmatter}
\title{Parameterized covering in semi-ladder-free hypergraphs}
\author{Sylvain Guillemot}
\ead{guillemo@free.fr}

\begin{abstract}
In this article, we study the parameterized complexity of the \textsc{Set Cover} problem restricted to semi-ladder-free hypergraphs, a class defined by Fabianski et al. [Proceedings of STACS 2019]  \footnote{Note that their results are formulated in terms of domination in bipartite graphs, while we find more convenient to use the language of hypergraphs.}. We observe that two algorithms introduced by Langerman and Morin [Discrete \& Computational Geometry 2005] in the context of geometric covering problems can be adapted to this setting, yielding simple FPT and kernelization algorithms for \textsc{Set Cover} in semi-ladder-free hypergraphs. We complement our algorithmic results with a compression lower bound for the problem, which proves the tightness of our kernelization under standard complexity-theoretic assumptions.
\end{abstract}
\end{frontmatter}

\section{\label{sec1}Introduction}

A \emph{hypergraph} is a pair $H = (V,E)$, where $V$ is its set of vertices and $E \subseteq 2^V$ is its set of edges. A \emph{cover} of $H$ is a set $F \subseteq E$ such that each vertex of $V$ belongs to at least one edge of $F$. The \textsc{Set Cover} problem takes a hypergraph $H$ and an integer $k$, and asks whether $H$ admits a cover of size at most $k$. This problem has been widely studied from the point of view of approximability and parameterized complexity, with an emphasis on structural restrictions that make it efficiently solvable. A well-known restriction is the fact of having bounded \emph{VC-dimension}, a notion introduced in \cite{VC71}. It is known that for hypergraphs of dual VC-dimension $d$, the \textsc{Set Cover} problem can be efficiently approximated \cite{BG95,ERS05}, in the sense that we have a polynomial-time algorithm that either concludes that there is no cover of size $k$, or finds a cover of size $O(d k \log k)$.

However, bounding the VC-dimension or its dual is not enough to make the \textsc{Set Cover} problem fixed-parameter-tractable in $k$. Indeed, it was proved in \cite{BKMN16} that the problem remains $\wone$-hard, even for hypergraphs having both VC-dimension and dual VC-dimension equal to 2. This motivates the search for subclasses of hypergraphs where the problem becomes fixed-parameter-tractable. We focus here on the class of \emph{$(d+1)$-semi-ladder free} hypergraphs; the precise definition is postponed to Section \ref{sec2}. This class is introduced in \cite{FPST19} \footnote{Our definition of $(d+1)$-semi-ladder-freeness corresponds to having semi-ladder-index at most $(d+1)$ in the sense of \cite{FPST19}.}, and it is shown here that \textsc{Set Cover} is solvable in $k^{O(d k)} ||H||$ time when $H$ is $(d+1)$-semi-ladder-free. This algorithm uses a novel algorithmic paradigm called progressive exploration, which is also applied to other problems such as finding distance $r$-dominating sets and distance $r$-independent sets in nowhere dense graphs.  

In this paper, we present new results on the parameterized complexity of \textsc{Set Cover} in $(d+1)$-semi-ladder-free hypergraphs. Our positive results are obtained by adapting two algorithms introduced in \cite{LM05} in the context of geometric covering problems. Our first algorithm uses a bounded-search-tree approach to solve \textsc{Set Cover} in $(d+1)$-semi-ladder-free hypergraphs in $O(k^{d k} k ||H||)$ time. We note that it is entirely different from the previously mentioned algorithm based on progressive exploration, although both algorithms achieve the same worst-case running time. Our second algorithm is a polynomial kernelization for the problem, which reduces an instance of \textsc{Set Cover} in $(d+1)$-semi-ladder-free hypergraphs to an equivalent instance of size $O(k^{d^2})$. In addition, we obtain a compression lower bound for the problem, which implies that the problem is unlikely to admit a kernel of size $O(k^{d^2-\epsilon})$ under standard complexity-theoretic assumptions.

We make a couple of observations here. First, the results of \cite{PRS12} providing a kernelization for \textsc{Dominating Set} in $K_{d,d}$-free graphs can be recovered as a consequence of our kernelization algorithm, though with slightly worse bounds. In particular, this encompasses the case of \textsc{Dominating Set} in $d$-degenerate graphs, for which a compression lower bound of $k^{(d-1)(d-3) - \epsilon}$ was already known \cite{CGH17}. Second, since semi-ladder-freeness is preserved by taking dual hypergraphs, it follows that our results also apply to the \textsc{Hitting Set} problem in $(d+1)$-semi-ladder-free hypergraphs. Third, it can be seen that $(d+1)$-semi-ladder-free hypergraphs have (dual) VC-dimension at most $d+1$, and thus the aforementioned approximation result also holds; it is unlikely that the $O(d k \log k)$ bound can be improved though, even for $d = 2$, as shown by \cite{KAR00}.

For additional background, we refer the reader to the textbooks \cite{B84} on hypergraphs, \cite{CFKLMPPS15} on parameterized algorithms and \cite{FLSZ19} on kernelization.

\section{\label{sec2}Preliminaries}

Let $H = (V,E)$ be a hypergraph and $d$ be a positive integer. A \emph{$d$-semi-ladder} in $H$ is a pair $(W,F)$ with (a) $W = \{w_0,\ldots,w_d\} \subseteq V$, $F = \{f_0,\ldots,f_d\} \subseteq E$, (b) for each $i \in [0;d]$, we have $w_i \notin f_i$, (c) for each $i,j \in [0;d]$, we have $i < j \Rightarrow w_i \in f_j$. A \emph{$d$-ladder} in $H$ is a pair $(W,F)$ with (a) $W = \{w_0,\ldots,w_d\} \subseteq V$, $F = \{f_0,\ldots,f_d\} \subseteq E$, (b) for each $i,j \in [0;d]$, we have $i < j \Leftrightarrow w_i \in f_j$. We say that $H$ is \emph{$d$-semi-ladder-free} (resp. \emph{$d$-ladder-free}) iff there is no $d$-semi-ladder (resp. $d$-ladder) in $H$.

Given $e,e' \in E$, we say that $e'$ \emph{covers} $e$ (in $H$) iff $e \subset e'$ and there is no edge $f \in E$ such that $e \subset f \subset e'$. Given $e \in E$, an \emph{$e$-chain} in $H$ is a chain of inclusions $e_0 \subset e_1 \subset \ldots \subset e_l$ with $e_0,\ldots,e_l \in E$ and $e_l = e$; the \emph{length} of the chain is $l$. We define $l_H(e)$ as the maximum length of an $e$-chain in $H$. We define $L(H)$ as the maximum of $l_H(e)$ for $e \in E$. 

We say that $H$ is \emph{intersection-closed} iff (a) $E$ contains the edge $V$, (b) for each $e,f \in E$, we have $e \cap f \in E$. In general, if $H$ is an arbitrary hypergraph, its \emph{intersection-closure} is the minimal hypergraph $H'$ such that (a) $H$ is a partial hypergraph of $H'$, (b) $H'$ is intersection-closed. Fix $S \subseteq V$. We let $K_H(S) = \{ e \in E : S \subseteq e \}$, and we define $M_H(S)$ equal: to $V$ (if $K_H(S) = \emptyset$), or to $\cap_{e \in K_H(S)} e$ (otherwise).

Theorem \ref{thm1} below gives a characterization of $(d+1)$-semi-ladder-free hypergraphs, in terms of their intersection-closure.

\begin{lemma} \label{lem1} Let $H = (V,E)$ be a hypergraph, let $e_1,e_2 \in E$, and let $H' = (V,E+\{e\})$ where $e = e_1 \cap e_2$. If $H'$ has a $d$-semi-ladder, then the same holds for $H$.
\end{lemma}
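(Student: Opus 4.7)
The plan is a case analysis on whether the new edge $e = e_1 \cap e_2$ participates in the semi-ladder of $H'$. Let $(W,F)$ with $W = \{w_0,\ldots,w_d\}$ and $F = \{f_0,\ldots,f_d\}$ be a $d$-semi-ladder in $H'$. If $f_i \neq e$ for every $i \in [0;d]$, then all edges of the semi-ladder lie in $E$, and $(W,F)$ is already a $d$-semi-ladder in $H$, so we are done. Otherwise, since the $f_i$'s are forced to be distinct by conditions (b) and (c), there is a unique index $k$ with $f_k = e$.

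Now I would use the new edge's defining property to pick a suitable replacement. From $w_k \notin f_k = e_1 \cap e_2$, at least one of $e_1, e_2$ does not contain $w_k$; assume without loss of generality that $w_k \notin e_1$. Define $F'$ by replacing $f_k$ with $e_1$, keeping the same $W$ and the same indexing. The goal is to show that $(W,F')$ is a $d$-semi-ladder in $H$.

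The verification proceeds condition by condition. For condition (b), at index $k$ we chose $w_k \notin e_1$, and for $i \neq k$ nothing changed. For condition (c), the only new case to check is $j = k$: for $i < k$, the original semi-ladder gives $w_i \in f_k = e_1 \cap e_2 \subseteq e_1$, as required; all other pairs $(i,j)$ with $i < j$ involve only unchanged edges.

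The only slightly subtle point, which I expect to be the main obstacle, is verifying that $F'$ still consists of $d+1$ distinct edges, i.e.\ that $e_1$ does not coincide with some $f_i$ for $i \neq k$. This follows from two short contradictions using the asymmetric roles of the indices: if $e_1 = f_i$ with $i < k$, then $w_i \notin f_i = e_1$ clashes with $w_i \in f_k \subseteq e_1$; if $e_1 = f_i$ with $i > k$, then $w_k \in f_i = e_1$ clashes with our choice $w_k \notin e_1$. Hence $F' \subseteq E$ has $d+1$ distinct elements and $(W,F')$ witnesses a $d$-semi-ladder in $H$, completing the proof.
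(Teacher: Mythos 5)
Your proof is correct and follows essentially the same route as the paper: handle the case $e \notin F$ trivially, and otherwise use $w_k \notin e_1 \cap e_2$ to replace the new edge by whichever of $e_1,e_2$ misses $w_k$, checking that conditions (b) and (c) survive. The extra distinctness verification you flag as the "main obstacle" is a fine (and harmless) addition that the paper leaves implicit, since conditions (b) and (c) already force the edges of any semi-ladder to be pairwise distinct.
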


\begin{proof} Suppose that $H'$ has a $d$-semi-ladder $(W,F)$, with $W = \{w_0,\ldots,w_d\}$ and $F = \{f_0,\ldots,f_d\}$. If $e \notin F$, then $(W,F)$ is a $d$-semi-ladder in $H$. Suppose now that $e$ is equal to $f_i$, for some $i \in [0;d]$. It follows that $w_0,\ldots,w_{i-1}$ belong to both $e_1$ and $e_2$, and that $w_i$ does not belong to $e_j$ for some $j \in \{1,2\}$. Replacing $f_i$ with $e_j$ then yields a pair $(W,F')$ that is a $d$-semi-ladder in $H$. \qed
\end{proof}

\begin{theorem} \label{thm1} Let $H = (V,E)$ be a hypergraph, let $H'$ be its intersection-closure, and let $d$ be a positive integer. The following are equivalent:
\begin{itemize}
\item[(i)] $H$ is $(d+1)$-semi-ladder-free;
\item[(ii)] $H'$ is $(d+1)$-ladder-free;
\item[(iii)] $L(H') \leq d+1$.
\end{itemize}
\end{theorem}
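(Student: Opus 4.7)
The plan is to establish (i)$\Leftrightarrow$(ii) and (ii)$\Leftrightarrow$(iii) separately. A single gadget does most of the work: given edges $f_0, \ldots, f_{d+1}$ of a semi-ladder or ladder in $H'$, define $g_j := \bigcap_{k \geq j} f_k$, which belongs to $E'$ by intersection-closure. Using only the semi-ladder conditions $w_i \in f_k$ for $i < k$ and $w_i \notin f_i$, one checks that $w_i \in g_j \Leftrightarrow i < j$, so $(W,\{g_0,\ldots,g_{d+1}\})$ is a $(d+1)$-ladder in $H'$, and the $g_j$ form a strict chain $g_0 \subsetneq g_1 \subsetneq \ldots \subsetneq g_{d+1}$ witnessed by $w_j \in g_{j+1} \setminus g_j$.

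For (i)$\Leftrightarrow$(ii), both directions go by contrapositive. If $H'$ has a $(d+1)$-ladder, it has in particular a $(d+1)$-semi-ladder, and iterated application of Lemma~\ref{lem1} transfers this semi-ladder back to $H$. The one subtlety is that the intersection-closure also adjoins the universe $V$ to $E$, but $V$ cannot appear as an $f_i$ in any semi-ladder (since $w_i \notin V$ is impossible), so this step too preserves the absence of a semi-ladder. Conversely, a $(d+1)$-semi-ladder in $H$ is also a semi-ladder in $H'$, and the gadget above immediately turns it into a $(d+1)$-ladder in $H'$.

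For (ii)$\Leftrightarrow$(iii): for (iii)$\Rightarrow$(ii) I would start from a $(d+1)$-ladder in $H'$, apply the gadget to obtain the chain $g_0 \subsetneq \ldots \subsetneq g_{d+1}$, and extend it by $V \in E'$ at the top; the inclusion $g_{d+1} \subsetneq V$ is strict because $g_{d+1} \subseteq f_{d+1}$ and $w_{d+1} \notin f_{d+1}$, yielding a chain of length $d+2$ that contradicts $L(H') \leq d+1$. For (ii)$\Rightarrow$(iii), given a chain $e_0 \subsetneq \ldots \subsetneq e_{d+2}$ in $H'$, I would set $f_j := e_j$ for $j \in [0;d+1]$, choose $w_j \in e_{j+1} \setminus e_j$ for $j \in [0;d]$, and additionally $w_{d+1} \in e_{d+2} \setminus e_{d+1}$; the ladder property $w_i \in f_j \Leftrightarrow i < j$ is immediate from the nesting, and the witnesses are pairwise distinct since $w_i \in e_{d+1}$ for $i \leq d$ whereas $w_{d+1} \notin e_{d+1}$.

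The main obstacle I anticipate is an apparent off-by-one between $(d+1)$-ladders and chains in $H'$: the gadget produces only a chain of length $d+1$ from a $(d+1)$-ladder, and conversely a chain of length $d+1$ produces only $d+1$ witnesses between consecutive edges, one short of the $d+2$ needed for a $(d+1)$-ladder. Both gaps are closed by the extra edge $V \in E'$ that intersection-closure provides for free; this same fact about $V$ is also what makes the reduction from $H'$ to $H$ in the (i)$\Leftrightarrow$(ii) step go through cleanly via Lemma~\ref{lem1}.
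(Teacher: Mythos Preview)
Your proof is correct, and the overall logic matches the paper's on two of the three links: your (ii)$\Rightarrow$(iii) (contrapositive) is identical to the paper's (ii)$\Rightarrow$(iii), and your reduction of a semi-ladder in $H'$ back to $H$ via iterated Lemma~\ref{lem1} (with the observation that $V$ is harmless) is the paper's (i)$\Rightarrow$(ii). The genuine difference is in how the cycle is closed. The paper proves (iii)$\Rightarrow$(i) directly: starting from a $(d+1)$-semi-ladder $(W,F)$ in $H$, it sets $S_i=\{w_0,\ldots,w_{i-1}\}$ and $e_i=M_{H'}(S_i)$, then shows $e_0\subset\cdots\subset e_{d+2}$ is a chain in $H'$. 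You instead go through (ii): your gadget $g_j=\bigcap_{k\geq j}f_k$ converts a semi-ladder into a ladder in $H'$, and then the same gadget (plus the top element $V\in E'$) converts a ladder into a chain of length $d+2$. Your intersections $g_j$ are taken only over the given edges $f_j,\ldots,f_{d+1}$, whereas the paper's $e_i=M_{H'}(S_i)$ intersects over \emph{all} edges containing $S_i$; both yield strict chains, and yours is the more elementary construction since it avoids introducing the operator $M_{H'}$ at this stage. The price is that your chain $g_0\subsetneq\cdots\subsetneq g_{d+1}$ is one step short and needs $V$ appended at the top, a point you handle correctly. The paper's route has the side benefit that $M_{H'}$ reappears later in the kernel-size bound (Lemma~\ref{lem3}, Theorem~\ref{thm2}), so introducing it here is not wasted effort.
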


\begin{proof} 
%
%
$(i) \Rightarrow (ii)$. Suppose that $H'$ contains a $(d+1)$-ladder. We may then apply Lemma \ref{lem1} to obtain a $(d+1)$-semi-ladder in $H$. 
 
$(ii) \Rightarrow (iii)$. Suppose that $L(H') > d+1$. It follows that $H'$ contains a chain $f_0 \subset \ldots \subset f_{d+1} \subset f_{d+2}$. For each $i \in [0;d+1]$, let us choose an element $w_i \in f_{i+1} - f_i$. Let $W = \{w_0,\ldots,w_{d+1}\}$ and $F = \{f_0,\ldots,f_{d+1}\}$. Then $(W,F)$ is a $(d+1)$-ladder in $H'$. 

$(iii) \Rightarrow (i)$. Suppose that $H$ contains a $(d+1)$-semi-ladder $(W,F)$, with $W = \{w_0,\ldots,w_{d+1}\}$ and $F = \{f_0,\ldots,f_{d+1}\}$. For each $i \in [0;d+1]$, let $S_i = \{ w_0,\ldots,w_{i-1} \}$, and let $e_i = M_{H'}(S_i)$. Since $H'$ is intersection-closed, it contains the edges $e_0,\ldots,e_{d+2}$. We claim that $e_i \subset e_{i+1}$ for each $i \in [0;d+1]$. On the one hand, since $S_i \subseteq f_i$ we have $e_i \subseteq f_i$ and thus $w_i \in e_{i+1} - e_i$. On the other hand, since $S_i \subseteq e_{i+1}$ we have $e_i \subseteq e_{i+1}$ by definition of $e_i$. We conclude that $e_0 \subset \ldots \subset e_{d+2}$ is a chain in $H'$, and thus $L(H') > d+1$, contradiction. \qed
\end{proof}

We say that the hypergraph $H$ is \emph{$d$-flat} iff it satisfies the conditions in the above theorem. We point out that $d$-flat hypergraphs are closed under taking dual hypergraphs, and under taking partial subhypergraphs.

In order to solve \textsc{Set Cover}, we would like to make a simplifying assumption. We say that $H$ is \emph{reduced} iff $\cap_{e \in E} e = \emptyset$. If $H$ is not reduced, we construct a hypergraph $H' = \textsc{Reduce}(H)$ as follows: we let $X = \cap_{e \in E} e$, and we let $H' = (V-X,\{ e - X : e \in E \})$. By construction, $H'$ is reduced. The following lemma ensures that to solve \textsc{Set Cover}, we may consider $H'$ instead of $H$.

\begin{lemma} \label{lem2} Let $H = (V,E)$ be a hypergraph and let $H' = \textsc{Reduce}(H)$. Then: $H$ has a cover of size at most $k$ iff $H'$ has a cover of size at most $k$.
\end{lemma}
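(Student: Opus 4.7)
The plan is a direct two-direction argument based on the edge correspondence $e \mapsto e - X$, where $X = \cap_{e \in E} e$ is the set of vertices deleted by \textsc{Reduce}. The key observation is that every vertex of $X$ lies in every edge of $E$ and so is automatically covered by any nonempty subfamily of $E$; the nontrivial content of being a cover therefore concerns only the vertices of $V - X$.

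For the forward direction, given a cover $F \subseteq E$ of $H$ with $|F| \leq k$, I would set $F' = \{e - X : e \in F\}$. Then $|F'| \leq |F| \leq k$, and for any $v \in V - X$ there is some $e \in F$ containing $v$; since $v \notin X$ we get $v \in e - X \in F'$. For the reverse direction, given a cover $F' \subseteq E' := \{e - X : e \in E\}$ of $H'$ with $|F'| \leq k$, I would pick for each $e' \in F'$ an arbitrary preimage $e \in E$ with $e - X = e'$, obtaining a family $F \subseteq E$ with $|F| \leq |F'| \leq k$. A vertex $v \in V - X$ lies in some $e' \in F'$ and hence in its chosen preimage, while a vertex $v \in X$ lies in every edge of $E$ and in particular in any edge of $F$; this last step requires $F \neq \emptyset$, which holds whenever $F'$ is nonempty.

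There is no real obstacle here: the argument is a short set-theoretic translation that matches the two cover conditions vertex by vertex. The only point worth flagging is the degenerate case $V - X = \emptyset$, in which $H'$ has empty vertex set and is covered by $F' = \emptyset$ while $H$ still requires a single edge to cover $X = V$. This is handled by recalling that \textsc{Reduce} is only invoked when $H$ is not already reduced (so $X \neq \emptyset$ and $E \neq \emptyset$), together with the fact that the relevant parameter range is $k \geq 1$.
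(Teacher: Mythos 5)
Your proof is correct and follows essentially the same route as the paper's: the forward direction maps each cover edge $e$ to $e - X$, and the reverse direction lifts each edge of the cover of $H'$ back to an edge of $E$ (the paper takes $e' \cup X$, which is exactly the unique preimage since $X \subseteq e$ for every $e \in E$, while you pick an arbitrary preimage — the same idea). The degenerate situation you flag ($V - X = \emptyset$ with $k = 0$) is a corner case the paper's proof silently ignores and is irrelevant to how the lemma is used.
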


\begin{proof} Let $X = \cap_{e \in E} e$ as above. Clearly, if $C = \{ e_1,\ldots,e_l \}$ is a cover of $H$, then $C' = \{ e_1 - X, \ldots, e_l - X \}$ is a cover of $H'$. Suppose that $C = \{ e'_1, \ldots, e'_l \}$ is a cover of $H'$, with $e'_1,\ldots,e'_l \subseteq V-X$. For each $i \in [l]$, let $e_i = e'_i \cup X$. By definition of $X$, we have $e_1,\ldots,e_l \in E$. We show that $C' = \{ e_1,\ldots,e_l \}$ is a cover of $H$. Indeed, an element of $X$ is covered by every edge $e_i$, and for $v \in V-X$ we have some $e'_i \in C'$ covering $v$, which implies that $e_i$ also covers $v$. \qed
\end{proof}

In the following, we will denote by \textsc{$d$-Flat Set Cover} the restriction of \textsc{Set Cover} to reduced $d$-flat hypergraphs.

The \emph{size} of $H$ is $||H|| = \sum_{e \in E} |e|$. We suppose that the hypergraph $H$ given as input is represented by adjacency lists; such a representation has size $||H||$. Note that the binary representation actually has a bitsize $O(||H|| \log |V|)$, but we will conveniently ignore the logarithmic factors. Theorem \ref{thm2} below bounds the size of a reduced $d$-flat hypergraph in terms of its number of vertices.

\begin{lemma} \label{lem3} Let $H = (V,E)$ be an intersection-closed hypergraph, let $e \in E$ and let $e_0 \subset e_1 \subset \ldots \subset e_l$ be a maximal $e$-chain in $H$. For each $i \in [0;l-1]$, let $v_i$ be an arbitrary element of $e_{i+1} - e_i$. Then: $M_H(\{v_0,\ldots,v_{l-1}\}) = e$.
\end{lemma}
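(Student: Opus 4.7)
My plan is to argue by contradiction. Set $f := M_H(\{v_0, \dots, v_{l-1}\})$. Since each $v_i$ lies in $e_{i+1} \subseteq e$, we have $e \in K_H(\{v_0,\ldots,v_{l-1}\})$, so this set is nonempty and intersection-closure promotes $f$ to a genuine edge of $H$ with $f \subseteq e$. The task reduces to ruling out $f \subsetneq e$.

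Assume for contradiction $f \subsetneq e$. I would then refine the given chain by intersecting everything with $f$. Setting $g_i := e_i \cap f$, intersection-closure gives $g_i \in E$ for all $i \in [0;l]$, with $g_l = f$ since $f \subseteq e_l$. The crucial observation is that $v_i \in f$ (by definition of $f$) and $v_i \in e_{i+1} \setminus e_i$, so $v_i \in g_{i+1} \setminus g_i$; hence the $g_i$'s form a strictly increasing chain $g_0 \subsetneq g_1 \subsetneq \ldots \subsetneq g_l = f$.

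To convert this into a genuine refinement or extension of the original chain, and thereby contradict its maximality, I would pick the smallest index $j$ with $e_j \not\subseteq f$; this exists because $e_l = e \not\subseteq f$. If $j = 0$, then $g_0 \subsetneq e_0$ provides an edge strictly below $e_0$, extending the chain downward. If $j > 0$, then $e_{j-1} \subseteq f$ forces $e_{j-1} = g_{j-1}$, and combined with $v_{j-1} \in g_j \setminus e_{j-1}$ and $e_j \not\subseteq f$ we obtain $e_{j-1} \subsetneq g_j \subsetneq e_j$, splicing the edge $g_j$ strictly between two consecutive members of the original chain.

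I expect the main obstacle to be exactly this splicing step: simply producing the chain of $g_i$'s does not by itself refine the original chain, since the $g_i$'s can sit below the $e_i$'s in an uncontrolled fashion. Identifying the threshold index $j$ where the original chain first crosses outside of $f$ is what pins down where to insert the new edge, and the hypothesis that $H$ is intersection-closed is used throughout to guarantee that the $g_i$'s are legitimate edges to work with.
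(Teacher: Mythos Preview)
Your argument is correct but takes a genuinely different route from the paper. The paper proceeds by a direct forward induction: setting $S_i = \{v_0,\ldots,v_{i-1}\}$, it shows $M_H(S_i) = e_i$ for every $i \in [0;l]$. The base case uses that $e_0$ must be the unique minimum edge of the intersection-closed $H$ (otherwise the chain could be extended below), and the induction step squeezes $e_i \subset M_H(S_{i+1}) \subseteq e_{i+1}$ and invokes that $e_{i+1}$ covers $e_i$ (again from maximality) to conclude equality. This yields the lemma at $i = l$ and in fact proves the stronger statement that every prefix of the $v_i$'s generates the corresponding member of the chain. Your approach instead assumes $f \subsetneq e$, intersects the whole chain with $f$, and locates the first index $j$ where $e_j$ escapes $f$ to splice in (or prepend) an extra edge, contradicting maximality. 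Both proofs ultimately hinge on the same maximality hypothesis, used locally (as a covering relation) in the paper and globally (no refinement possible) in yours; the paper's induction buys the intermediate identities $M_H(S_i)=e_i$, while your contradiction is slightly more self-contained and avoids tracking the induction invariant.
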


\begin{proof} For each $i \in [0;l]$, let $S_i = \{v_0,\ldots,v_{i-1}\}$. We show by induction on $i$ that $M_H(S_i) = e_i$. This holds for $i = 0$ since $e_0$ is the unique minimal element of $H$ and thus $M_H(\emptyset) = e_0$. Suppose that the property holds for $i$ and let us prove it for $i+1$. We then have $e_i \subset e_{i+1}$, $M_H(S_i) = e_i$ and $v_i \in e_{i+1} - e_i$. Since $S_i \subseteq e_i$ and $v_i \in e_{i+1}$, we have $S_{i+1} \subseteq e_{i+1}$. It follows that $e_i \subset M_H(S_{i+1}) \subseteq e_{i+1}$. Since $e_{i+1}$ covers $e_i$ in $H$, we conclude that $M_H(S_{i+1}) = e_{i+1}$. \qed
\end{proof}

\begin{theorem} \label{thm2} Let $H = (V,E)$ be a reduced $d$-flat hypergraph. Let $n = |V|$. Then: $||H|| = O(n^d)$.
\end{theorem}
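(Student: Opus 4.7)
I plan to pass from $H$ to its intersection-closure $H'$ and bound $||H||\leq||H'||$ by counting vertex-edge incidences $(v,e)$ with $v\in e\in E(H')$. By Theorem~\ref{thm1} we have $L(H')\leq d+1$; by reducedness, $\emptyset=\cap_{e\in E}e\in E(H')$; and by intersection-closedness, $V\in E(H')$. Since $E\subseteq E(H')$ we have $||H||\leq||H'||$, so it suffices to show $||H'||=O(n^d)$.

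The structural point to exploit is that $V$ is the unique edge of $H'$ at the top level: any chain ending at an edge $e\neq V$ extends by appending $V$, so $l_{H'}(V)=L(H')$ and $l_{H'}(e)\leq L(H')-1\leq d$ for every $e\in E(H')\setminus\{V\}$. For each incidence $(v,e)$ with $v\in e$ and $e\neq V$, I apply Lemma~\ref{lem3} to a maximal chain $\emptyset=e_0\subset\cdots\subset e_l=e$ (so $l\leq d$), but with an adaptive choice of witnesses: I let $j$ be the smallest index with $v\in e_j$, set the $(j-1)$-th witness equal to $v$, and pick the other witnesses arbitrarily from $e_{i+1}\setminus e_i$. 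By Lemma~\ref{lem3} this yields a set $T\subseteq e$ with $v\in T$, $|T|=l\leq d$, and $M_{H'}(T)=e$. Setting $S:=T\setminus\{v\}\subseteq V\setminus\{v\}$ with $|S|\leq d-1$, the map $(v,e)\mapsto(v,S)$ is injective, because $e$ is recovered as $M_{H'}(S\cup\{v\})$.

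Counting then gives at most $n\cdot\sum_{i=0}^{d-1}\binom{n-1}{i}=O(n^d)$ incidences with $e\neq V$, plus the $n$ incidences with $e=V$, so $||H'||=O(n^d)$, and therefore $||H||=O(n^d)$.

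The main obstacle I anticipate is getting the right saving in the bound: a direct application of Lemma~\ref{lem3} encodes edges by sets of size up to $L(H')\leq d+1$ and yields only $O(n^{d+1})$. The refinement is to isolate $V$ as the unique top-level edge, forcing every other edge of $H'$ to sit at level $\leq d$, and then to gain one further factor of $n$ by peeling off the distinguished vertex $v$ from the witness set so that only $\leq d-1$ other witnesses need to be chosen.
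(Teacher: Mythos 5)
Your proof is correct and follows essentially the same route as the paper: pass to the intersection-closure, note that every edge other than $V$ has $l_{H'}(e)\leq d$, and use Lemma~\ref{lem3} with the adaptive choice of witness $v$ to encode each incidence $(v,e)$ injectively by a pair (vertex, small witness set), then count. Your ``peeling off $v$'' is just a rephrasing of the paper's count of pairs $(S,v)$ with $v\in S$, $|S|\leq d$, so there is no substantive difference.
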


\begin{proof} It suffices to prove the property when $H$ is intersection-closed. Let $E' = E - \{ V \}$. Let $P$ denote the set of pairs $(e,v)$ with $e \in E, v \in e$, and let $P'$ denote the set of pairs $(e,v)$ with $e \in E', v \in e$. Let $Q$ denote the set of pairs $(S,v)$ with $S \in [V]^{\leq d}$ and $v \in S$. We define the mapping $F : Q \rightarrow P$ such that: given $q = (S,v) \in Q$, we let $F(q) = (M_H(S),v)$. Observe that we have $F(q) \in P$: we have $M_H(S) \in E$ by definition, and since $v \in S$ and $S \subseteq M_H(S)$ we have $v \in M_H(S)$.

We show that every element of $P'$ is in the range of $F$. Consider an element $p = (e,v) \in P'$, with $e \in E'$ and $v \in e$. Let $e_0 \subset e_1 \subset \ldots \subset e_l$ be a maximal $e$-chain in $H$; we have $l \leq d$ by Theorem \ref{thm1}. Since $H$ is reduced, we have $e_0 = \emptyset$, and thus there is an index $i \in [0;p-1]$ such that $v \in e_{i+1} - e_i$. We may apply Lemma \ref{lem3} with $v_i = v$ to obtain a set $S = \{v_0,\ldots,v_{l-1}\}$ containing $v$ such that $M_H(S) = e$. Thus, if we let $q = (S,v)$ we have $q \in Q$ and $F(q) = p$.

We deduce from the above that $|P'| \leq |Q|$. Let $Z = \sum_{e \in E'} |e|$. Observe that $Z = |P'|$ and $|Q| \leq n^d$, hence we have $Z \leq n^d$. We conclude by observing that $||H|| = Z + n = O(n^d)$. \qed
\end{proof}

%
%
%

We observe that a similar reasoning as in Theorem \ref{thm2} shows that for $H = (V,E)$ $d$-flat hypergraph, we have $|E| \leq 1 + \sum_{i = 0}^{d} \binom{n}{i}$. We note that the Sauer-Shelah Lemma \cite{Sa72} provides a weaker bound $|E| \leq \sum_{i = 0}^{d+1} \binom{n}{i}$, using the fact that $d$-flat hypergraphs have VC-dimension at most $d+1$.


\section{\label{sec3}Algorithmic results}

\subsection{FPT algorithm}

In this section, we describe an FPT algorithm for \textsc{$d$-Flat Set Cover}. It is an adaptation of the algorithm \textsc{Bst-Dim-Set-Cover} of \cite{LM05}. Its pseudocode is given in Algorithm \ref{algo1} below, and its correctness is stated in Theorem \ref{thm4}.

Suppose that we are given a hypergraph $H = (V,E)$ and an integer $k$. Let $H' = (V,E')$ be the intersection-closure of $H$. We will need the following definitions. A \emph{$k$-edge-tuple in $H$} is a tuple $t = (e_1,\ldots,e_k)$ where $e_1,\ldots,e_k$ are in $E'$ and distinct from $V$. Given $i \in [k]$ and $e \in E'$ distinct from $V$, we denote by $t[i \rightarrow e]$ the $k$-edge-tuple $t'$ such that $t'[i] = e$ and $t'[j] = t[j]$ ($j \neq i$). A \emph{solution} for $(H,k,t)$ is a set $S = \{f_1,\ldots,f_k\}$ such that (a) $S \subseteq E$, (b) for each $i \in [k]$, $e_i \subseteq f_i$, (c) $S$ is a cover of $H$.

The following algorithm $\textsc{SolveSetCover}(H,k,t)$ takes a hypergraph $H$, an integer $k$ and $t$ $k$-edge-tuple in $H$, and decides the existence of a solution for $(H,k,t)$.

\begin{algorithm}[!]
\caption{$\textsc{SolveSetCover}(H,k,t)$}
\label{algo1}
\begin{algorithmic}[1]
\STATE suppose that $t = (e_1,\ldots,e_k)$
\STATE let $C = e_1 \cup \ldots \cup e_k$
\IF{$C = V$}
\STATE for each $i \in [k]$, let $f_i$ be an edge of $E$ containing $e_i$
\STATE return $\{f_1,\ldots,f_k\}$
\ENDIF
\STATE choose $v \in V - C$
\FOR{$i$ from $1$ to $k$}
\STATE let $f = M_H(e_i+v)$
\IF{$f \neq V$}
\STATE $r \leftarrow \textsc{SolveSetCover}(H,k,t [i \rightarrow f])$
\STATE if $r \neq \perp$ then return $r$
\ENDIF
\ENDFOR
\STATE return $\perp$
\end{algorithmic}
\end{algorithm}

\begin{lemma} \label{lem4} Consider a call $\textsc{SolveSetCover}(H,k,t)$, where $H = (V,E)$, $V \notin E$, and $t$ is a $k$-edge-tuple in $H$. If the call returns $S \neq \perp$, then $S$ is a solution for $(H,k,t)$; if the call returns $\perp$, then there is no solution for $(H,k,t)$.
\end{lemma}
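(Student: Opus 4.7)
The plan is to prove partial correctness of the recursion by induction on the measure $\mu(t) = \sum_{i=1}^{k} |V - e_i|$, which is a nonnegative integer. Each recursive call replaces the $i$-th component by $f = M_H(e_i+v)$ with $v \in V - C \subseteq V - e_i$; since $f \supseteq e_i + v \supsetneq e_i$, the measure $\mu$ strictly decreases, so the induction is well-founded.

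For soundness, in the base case $C = V$ every $e_i \in E'$ is distinct from $V$ and is therefore a nonempty intersection of edges of $E$, so an $f_i \in E$ with $e_i \subseteq f_i$ exists for each $i$. The returned family $\{f_1,\ldots,f_k\}$ covers $V$ because $\bigcup_i f_i \supseteq \bigcup_i e_i = C = V$, and every containment $f_i \supseteq e_i$ holds by construction; hence it is a solution. In the recursive case, if the call returns $r \neq \perp$ from the subcall on $t' = t[i \rightarrow f]$ with $f = M_H(e_i+v)$, the induction hypothesis gives that $r = \{f_1,\ldots,f_k\}$ is a solution for $(H,k,t')$; since $f \supseteq e_i$, the condition $f_i \supseteq f$ upgrades to $f_i \supseteq e_i$, while the other components of $t$ are unchanged, so $r$ is also a solution for $(H,k,t)$.

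For completeness, I assume the algorithm returns $\perp$ yet a solution $S = \{f_1,\ldots,f_k\}$ for $(H,k,t)$ exists, and derive a contradiction. The base case would always produce a solution, so $C \neq V$ and the algorithm picks some $v \in V - C$. Since $S$ covers $V$, some $f_i \in S$ satisfies $v \in f_i$, giving $f_i \in K_H(e_i+v)$; letting $f^* = M_H(e_i+v)$, the definition of $M_H$ as the intersection over $K_H(e_i+v)$ yields $f^* \subseteq f_i$. Because $f_i \in E$ and $V \notin E$, we get $f^* \neq V$, so the $i$-th iteration does invoke the recursive call on $t' = t[i \rightarrow f^*]$. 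The same $S$ remains a solution for $(H,k,t')$ (as $f_i \supseteq f^*$ while the other components of $t$ are preserved), so by induction that subcall returns some $r \neq \perp$ and the enclosing call forwards it, contradicting the assumption.

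The main subtlety lies in the test $f \neq V$ on line~10: completeness relies critically on the hypothesis $V \notin E$ to guarantee that the recursive call is actually triggered for the index $i$ witnessed by $S$, and on interpreting $M_H(\cdot)$ as the infimum of $K_H(\cdot)$ to guarantee both $f^* \subseteq f_i$ and hence that the same $S$ certifies a solution of the refined sub-instance.
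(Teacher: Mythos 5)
Your proof is correct and follows essentially the same route as the paper: the same three-case analysis (base case $C=V$, successful recursive return, and return of $\perp$ refuted via a hypothetical solution, using $f_i \in K_H(e_i+v)$, $f \subseteq f_i$ and $V \notin E$). The only cosmetic difference is that you induct on the explicit measure $\sum_i |V-e_i|$ rather than on the height of the call tree, which has the minor added benefit of making well-foundedness (termination) self-contained; just note in passing that $M_H(e_i+v)$ always lies in $E'$, so the modified tuple is indeed a $k$-edge-tuple and the induction hypothesis applies.
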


\begin{proof} We reason by induction on the height of the call tree. 

Suppose first that the current call exits in Line 5. For each $i \in [k]$, since $e_i \in E'$ and $e_i \neq V$, Line 4 can find an edge $f_i$ as stated. Since $C = V$, it follows that $\{ f_1,\ldots,f_k \}$ is a solution for $(H,k,t)$. Suppose now that the current call exits in Line 12. Thus, the call in Line 11 has returned $S \neq \perp$. Consider the tuple $t' = t[i \rightarrow f]$. Since $f \neq V$, we infer that $t'$ is a $k$-edge-tuple in $H$. By induction hypothesis, $S$ is a solution for $(H,k,t')$. Since $e_i \subset f$, $S$ is also a solution for $(H,k,t)$. 

Suppose finally that the current call exits in Line 15. Let us suppose by contradiction that there exists $S = \{f_1,\ldots,f_k\}$ solution for $(H,k,t)$. Let $v \in V-C$ as chosen in Line 7. By definition of $S$, we have $v \in f_i$ for some $i \in [k]$. Consider the $i$th step of the loop in Lines 8-14, and consider $f$ as defined in Line 9, and $t' = t[i \rightarrow f]$. Since $S$ is a solution for $(H,k,t)$ and $v \in f_i$, we have $e_i+v \subseteq f_i$. It follows that $f_i \in K_H(e_i+v)$, and thus $f \neq V$. By definition, we have $f \subseteq f_i$, and thus $S$ is a solution for $(H,k,t')$. But then the recursive call in Line 11 would not have returned $\perp$, contradiction. \qed
\end{proof}

\begin{theorem} \label{thm4} Fix an instance $(H,k)$ of \textsc{$d$-Flat Set Cover} with $H = (V,E)$ and $V \notin E$. Let $e_0 = \emptyset$, and let $t_0 = (e_0,\ldots,e_0)$. The call $\textsc{SolveSetCover}(H,k,t_0)$ runs in $O(k^{d k} k ||H||)$ time, and correctly solves the instance $(H,k)$.
\end{theorem}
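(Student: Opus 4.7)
The plan is to derive correctness from Lemma \ref{lem4} and then bound the size of the recursion tree using the chain-length bound of Theorem \ref{thm1}. For correctness, note that a solution for $(H,k,t_0)$ with $t_0 = (\emptyset, \ldots, \emptyset)$ is by definition a set $\{f_1, \ldots, f_k\} \subseteq E$ covering $V$, since the containments $\emptyset \subseteq f_i$ are vacuous; equivalently (allowing padding by a repeated edge when a shorter cover is available), such solutions are exactly the covers of $H$ of size at most $k$. Applying Lemma \ref{lem4} to the top-level call then yields the desired decision.

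For the running time, let $H' = (V,E')$ denote the intersection-closure of $H$. Fix any root-to-leaf path in the recursion tree, and for each $i \in [k]$ track the successive values $e_i^{(0)}, e_i^{(1)}, \ldots, e_i^{(m_i)}$ taken by the $i$-th component of the tuple along this path; since $H$ is reduced, $e_i^{(0)} = \emptyset \in E'$. Each update sets $e_i^{(j+1)} = M_H(e_i^{(j)} + v)$ for some $v \in V - C$, and because $e_i^{(j)} \subseteq C$ we have $v \notin e_i^{(j)}$, hence $e_i^{(j+1)} \supsetneq e_i^{(j)}$; moreover the algorithm only recurses when $e_i^{(j+1)} \neq V$. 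Since $M_H(\cdot)$ takes values in $E'$, this yields a strictly increasing chain in $E' \setminus \{V\}$ starting at $\emptyset$. Appending $V$ produces a chain of length $m_i + 1$ in $H'$, so by Theorem \ref{thm1}(iii) we obtain $m_i + 1 \leq L(H') \leq d+1$, i.e., $m_i \leq d$. Summing over $i \in [k]$, every root-to-leaf path has length at most $dk$.

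Since each call branches over at most $k$ choices of $i$, the recursion tree has $O(k^{dk})$ nodes. At each node, the dominant cost is computing up to $k$ values of the form $M_H(e_i + v)$, each of which takes $O(||H||)$ time by scanning the adjacency-list representation of $H$ to identify $K_H(e_i + v)$ and intersect its members; the work in Lines 2--7 and Lines 4--5 is also absorbed in $O(k ||H||)$ per node. Thus the total running time is $O(k^{dk} \cdot k ||H||)$, as claimed. The main obstacle in this analysis is the depth bound, which is exactly where $d$-flatness enters essentially; the subtle point is to extend each per-index chain by $V$, exploiting the $f \neq V$ guard enforced in Line 10, in order to apply Theorem \ref{thm1} with its tightest constant.
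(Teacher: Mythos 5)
Your proof is correct and follows essentially the same route as the paper: correctness via Lemma \ref{lem4}, and a running-time bound obtained by bounding the recursion depth by $dk$ through chain lengths in the intersection-closure via Theorem \ref{thm1}(iii), then multiplying the $O(k^{dk})$ calls by the $O(k\,\|H\|)$ cost per call. The paper packages the depth bound as a potential function $m(t) = dk - \sum_i l_{H'}(e_i)$ that decreases at each recursive call, which is just the aggregated form of your per-component strictly increasing chains, so the two arguments are the same in substance.
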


\begin{proof} The correctness follows from Lemma \ref{lem4}. Let us justify the running time.

We first consider the time taken by a recursive call, excluding subcalls. First, observe that the instructions in Lines 2-6 take $O(k ||H||)$ time. Now, for a fixed $i \in [k]$, Line 9 takes $O(||H||)$ time, hence the loop in Lines 8-14 takes $O(k ||H||)$ time. Thus, a recursive call takes $O(k ||H||)$ time in total.

We now consider the number of recursive calls. Let $H'$ be the intersection-closure of $H$. Given $t$ $k$-edge-tuple in $H$, we define its measure as $m(t) = d k - \sum_{i = 1}^{k} l_{H'}(e_i)$. Since $H$ is $d$-flat, it follows from Theorem \ref{thm1} that for $t$ $k$-edge-tuple in $H$, we always have $m(t) \geq 0$. Moreover, we have: (a) $t_0$ is a $k$-edge-tuple in $H$ such that $m(t_0) = dk$; (b) for a call $\textsc{SolveSetCover}(H,k,t)$ with $t$ $k$-edge-tuple in $H$, for each recursive call $\textsc{SolveSetCover}(H,k,t')$ we have $t'$ $k$-edge-tuple in $H$ and $m(t') \leq m(t)-1$. Since each call issues $k$ recursive calls, the total number of recursive calls is thus $O(k^{d k})$.

From the above points, we conclude that the initial call to $\textsc{SolveSetCover}(H,k,t_0)$ takes $O(k^{d k} k ||H||)$ time. \qed
\end{proof}

\subsection{Kernelization algorithm}

We now describe a kernelization for \textsc{$d$-Flat Set Cover}. It is an adaptation of the algorithm \textsc{Kernelize} of \cite{LM05}. The kernelization is described in Algorithm \ref{algo2} below, and its correctness is stated in Theorem \ref{thm5}.


Consider a hypergraph $H = (V,E)$ and a set $S \subseteq V$. The operation of \textit{grouping $S$} produces the hypergraph $H' = (V',E')$, where $V' = V - S + \{v\}$, $v$ is a new vertex not in $V$, and $E'$ contains (a) for each $e \in E$ such that $S \nsubseteq e$, the edge $e - S$, (b) for each $e \in E$ such that $S \subseteq e$, the edge $e - S + \{v\}$. We denote by $\textsc{Group}(H,S)$ the result of this operation.

The following algorithm $\textsc{KernelizeSetCover}(H,k)$ takes a hypergraph $H$ and an integer $k$, and computes a kernel for the instance $I = (H,k)$.

\begin{algorithm}[!]
\caption{$\textsc{KernelizeSetCover}(H,k)$}
\label{algo2}
\begin{algorithmic}[1]
\STATE $cont \leftarrow true$
\WHILE{$cont$}
\STATE let $H' = (V',E')$ be the intersection-closure of $H$
\STATE let $Z = \{ (e,i) : e \in E', i > 0, l_{H'}(e) = i \text{ and } |e| > k^{i-1} \}$
\IF{$Z = \emptyset$}
\STATE $cont \leftarrow false$
\ELSE
\STATE choose $(e,i) \in Z$ with $i$ minimum
\STATE $H \leftarrow \textsc{Group}(H,e)$
\ENDIF
\ENDWHILE
\STATE return $H$
\end{algorithmic}
\end{algorithm}

\begin{lemma} \label{lem5} Consider a hypergraph $H = (V,E)$, a set $S \subseteq V$, and let $H' = \textsc{Group}(H,S)$. If $H$ is $d$-flat, then so is $H'$.
\end{lemma}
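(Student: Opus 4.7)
The plan is to use the equivalence from Theorem \ref{thm1}, so that proving $d$-flatness reduces to proving $(d+1)$-semi-ladder-freeness, and then to show the contrapositive: any $(d+1)$-semi-ladder in $H' = \textsc{Group}(H,S)$ lifts to a $(d+1)$-semi-ladder in $H$. This is more direct than trying to argue about the intersection-closure of $H'$ (which need not be the image of the intersection-closure of $H$).

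First I would set up the correspondence between edges. Let $v$ be the new vertex introduced by grouping. For each $f' \in E(H')$, let $\sigma(f')$ be a choice of $f \in E$ producing $f'$, so that either $f' = f - S$ (when $S \nsubseteq f$) or $f' = f - S + \{v\}$ (when $S \subseteq f$). The key properties I would record are: for any $w \in V - S$, we have $w \in f'$ iff $w \in \sigma(f')$; and $v \in f'$ iff $S \subseteq \sigma(f')$, while $v \notin f'$ iff $S \nsubseteq \sigma(f')$.

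Now assume $(W', F')$ is a $(d+1)$-semi-ladder in $H'$, with $W' = \{w'_0,\ldots,w'_{d+1}\}$ and $F' = \{f'_0,\ldots,f'_{d+1}\}$. I would set $f_i = \sigma(f'_i)$ and construct $W = \{w_0,\ldots,w_{d+1}\}$ as follows. If $v \notin W'$, just take $w_i = w'_i$; the semi-ladder conditions transfer immediately from the observations above. If $v = w'_k$ for some (unique) $k$, set $w_i = w'_i$ for $i \neq k$. For $w_k$, use that $v \notin f'_k$ to infer $S \nsubseteq f_k$, pick any $s \in S \setminus f_k$, and set $w_k = s$. Then for $j > k$ we have $v \in f'_j$, so $S \subseteq f_j$ and in particular $s \in f_j$, which gives condition (c) at $i = k$. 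Conditions (b) and (c) at the other positions follow from the observations above, after noting that $w'_i \notin S$ and $w'_i \neq v$ for $i \neq k$.

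The main obstacle is precisely the case where the new vertex $v$ appears in the semi-ladder of $H'$, since then we must replace $v$ by a concrete element of $S$ that still satisfies both (b) and (c) at position $k$. The unlocking observation is that the semi-ladder condition $v \notin f'_k$ forces $S \nsubseteq f_k$, which provides a separating element $s \in S \setminus f_k$; combined with $S \subseteq f_j$ for $j > k$, this $s$ satisfies all required incidences simultaneously. Once this is noted, the remaining verifications are routine and the lifted pair $(W,F)$ is a $(d+1)$-semi-ladder in $H$, completing the contrapositive.
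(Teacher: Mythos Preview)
Your proposal is correct and follows essentially the same route as the paper: both arguments reduce $d$-flatness to $(d+1)$-semi-ladder-freeness via Theorem~\ref{thm1}, lift each edge of a putative $(d+1)$-semi-ladder in $H'$ to a preimage in $H$, and handle the only nontrivial case $v = w'_k$ by choosing $s \in S \setminus f_k$ using $S \nsubseteq f_k$ (from $v \notin f'_k$) together with $S \subseteq f_j$ for $j > k$. Your write-up is in fact a bit more explicit about the choice map $\sigma$ and the membership equivalences, but the idea is identical.
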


\begin{proof} Suppose that $H' = (V',E')$, with $V' = V - S + \{v\}$. Suppose by contradiction that $H'$ contains a $(d+1)$-semi-ladder $(W,F)$ with $W = \{w_0,\ldots,w_{d+1}\}$ and $F = \{f_0,\ldots,f_{d+1}\}$. For each $i \in [0;d+1]$, let $f'_i$ be the edge of $H$ corresponding to $f_i$, and let $F' = \{f'_0,\ldots,f'_{d+1}\}$. If $v \notin W$, then $(W,F')$ is a $(d+1)$-semi-ladder in $H$. Suppose now that $v \in W$, then $v = w_i$ for some $i \in [0;d+1]$. Since $w_i \notin f_i$, we have $S \nsubseteq f'_i$, and thus we find an element $w' \in S$ such that $w' \notin f'_i$. For each $j > i$, since $w_i \in f_j$ we have $S \subseteq f'_j$ and thus $w \in f'_j$. Thus, if we replace $w_i$ with $w'$ we obtain $(W',F')$ $(d+1)$-semi-ladder in $H$. \qed
\end{proof}

\begin{lemma} \label{lem6} Consider the instruction in Line 9, and let $H_1,H_2$ be the old and new values of $H$. The instances $(H_1,k)$ and $(H_2,k)$ are equivalent.
\end{lemma}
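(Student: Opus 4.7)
My plan is to prove each direction separately. The direction ``$H_2$ has a cover of size at most $k$ $\Rightarrow$ $H_1$ has a cover of size at most $k$'' is a direct preimage construction, while the reverse direction relies crucially on the choice of $(e,i)$ in Line 8 being minimum and on the size bound $|e| > k^{i-1}$ from the definition of $Z$.

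For the backward direction, given a cover $C_2$ of $H_2$ of size at most $k$, I would pick for each $f_2 \in C_2$ an arbitrary preimage $f_1 \in E_1$ under the grouping map. Every vertex $u \in V - e$ is covered because the edge of $C_2$ that covers $u$ lifts to an edge of $E_1$ that still contains $u$. The new vertex $v \in V'$ is covered by some $f_2 \in C_2$ necessarily of the form $f_1 - e + \{v\}$, which forces $e \subseteq f_1$, so this single preimage covers all of $e$ in $H_1$. The collection of preimages forms a cover of $H_1$ of size at most $k$.

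For the forward direction, given a cover $C_1$ of $H_1$ with $|C_1| \leq k$, the images of the edges of $C_1$ in $H_2$ clearly cover every vertex of $V - e$; the issue is whether $v$ is covered, which requires some $f \in C_1$ with $e \subseteq f$. The key claim is thus: \emph{every cover $C_1$ of $H_1$ of size at most $k$ must contain some $f$ with $e \subseteq f$}. I would argue this by contradiction. If no such $f$ exists, then for every $f \in C_1$ the set $e' := e \cap f$ is a proper subset of $e$. Since $E'$ is intersection-closed and $E \subseteq E'$, $e' \in E'$ with $l_{H'}(e') < l_{H'}(e) = i$ by chain monotonicity. If $l_{H'}(e') = 0$, reducedness of $H_1$ forces $e' = \emptyset$; otherwise, setting $j = l_{H'}(e') \in \{1,\ldots,i-1\}$, the minimality of $i$ in Line 8 implies $(e',j) \notin Z$, giving $|e'| \leq k^{j-1} \leq k^{i-2}$. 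Since $C_1$ covers $e$, we have $e = \bigcup_{f \in C_1}(e \cap f)$, so $|e| \leq \sum_{f \in C_1} |e \cap f| \leq k \cdot k^{i-2} = k^{i-1}$, contradicting $|e| > k^{i-1}$.

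The main obstacle is this forward direction, and specifically the edge case $i = 1$, where the bound $|e \cap f| \leq k^{-1}$ is not meaningful: there, $l_{H'}(e') < 1$ forces $e' = \emptyset$ for every $f \in C_1$, yielding $e = \emptyset$ and contradicting $|e| > 1$. The remaining work is routine: once we know some $f \in C_1$ contains $e$, we assemble the image cover in $H_2$ from the images of the edges of $C_1$ and verify that its size does not exceed $k$.
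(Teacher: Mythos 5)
Your proof is correct and follows essentially the same route as the paper's: lifting a cover of $H_2$ via preimages, and for the other direction showing every size-$\leq k$ cover of $H_1$ must contain an edge including $e$, using intersection-closedness, the minimality of $i$ in Line 8, and the bound $|e| > k^{i-1}$ to derive the counting contradiction $\sum_{f}|e\cap f| \leq k\cdot k^{i-2} = k^{i-1}$. Your explicit treatment of the cases $l_{H'}(e')=0$ and $i=1$ (via reducedness) is slightly more careful than the paper, which glosses over them.
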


\begin{proof} Suppose that $H_i = (V_i,E_i)$ for $i \in \{1,2\}$. We then have $V_2 = V_1 - e + \{v\}$. We show that the instances $(H_1,k)$ and $(H_2,k)$ are equivalent.

Suppose that $(H_2,k)$ is a yes-instance. Let $S = \{f_1,\ldots,f_l\}$ be a cover of $H_2$ of size at most $k$. For each $i \in [l]$, suppose that the edge $f_i$ of $H_2$ comes from the edge $e_i$ of $H_1$. We show that $S' = \{e_1,\ldots,e_l\}$ is a cover of $H_1$. For a vertex $u \in V_1 - e$, we have $u \in f_i$ for some $i \in [l]$, and thus $u \in e_i$. For a vertex $u \in e$, we have $v \in f_i$ for some $i \in [l]$, and thus $e \subseteq e_i$ and $u \in e_i$. 

Suppose that $(H_1,k)$ is a yes-instance. Let $S = \{e_1,\ldots,e_l\}$ be a cover of $H_1$ of size at most $k$. For each $i \in [l]$, let $f_i$ be the edge of $H_2$ coming from $e_i$. We show that $S' = \{f_1,\ldots,f_l\}$ is a cover of $H_2$. For a vertex $u \in V_1 - e$, we have $u \in e_i$ for some $i \in [l]$, and thus $u \in f_i$. It remains to show that $v$ is covered by $S'$. This is clear if we have $e \subseteq e_i$ for some $i \in [l]$. Suppose that this is not the case. For each $i \in [l]$, let $e'_i = e_i \cap e$; then $e'_i$ belongs to $H'$, $e'_i \subset e$ and thus $l_{H'}(e'_i) < i$. By choice of $(e,i)$, we have $|e'_i| \leq k^{i-2}$. But then $\sum_{i = 1}^{l} |e'_i| \leq k^{i-1}$ and thus there is a vertex in $e$ not covered by $S$, contradiction. \qed
\end{proof}

\begin{theorem} \label{thm5} Fix an instance $(H,k)$ of \textsc{$d$-Flat Set Cover}, where $H$ has $n$ vertices. The call $\textsc{KernelizeSetCover}(H,k)$ computes in $O(n^{2d+2})$ time an equivalent instance $(H_r,k)$ of size $||H_r|| = O(k^{d^2})$.
\end{theorem}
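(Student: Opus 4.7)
The proof decomposes into three parts---correctness, the size bound, and the running time---and I would address them in this order.

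\textbf{Correctness.} I would argue by induction on the iterations that the invariant ``$H$ is $d$-flat and $(H,k)$ is equivalent to the original instance'' is maintained. The inductive step is immediate from Lemma \ref{lem5} ($d$-flatness is preserved by grouping) and Lemma \ref{lem6} (equivalence is preserved by one grouping). It is also useful to note that reducedness is preserved: neither the fresh vertex $v$ nor any $u \in V - S$ can lie in every new edge, since that would force $S$ or $u$ to lie in every old edge, contradicting reducedness (recall that $|S| = |e| \geq 2$ at every iteration).

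\textbf{Size bound.} This is the crux. When the loop exits, $Z = \emptyset$, so every edge $e$ of the intersection-closure $H' = (V, E')$ of $H_r$ with $l_{H'}(e) \geq 1$ satisfies $|e| \leq k^{l_{H'}(e) - 1}$. Since $H_r$ is $d$-flat, Theorem \ref{thm1} gives $L(H') \leq d + 1$. Now $V$ always belongs to $E'$ by definition of the intersection-closure, and apart from the trivial case $V = \emptyset$ we have $l_{H'}(V) \geq 1$; applying the termination bound to $V$ therefore yields $|V| \leq k^{d}$. Since $H_r$ is a reduced $d$-flat hypergraph on at most $k^d$ vertices, Theorem \ref{thm2} gives $||H_r|| = O((k^d)^d) = O(k^{d^2})$.

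\textbf{Running time.} Each iteration groups an edge of size $|e| > k^{i-1} \geq 1$, so $|e| \geq 2$ and $|V|$ strictly decreases; the loop therefore runs at most $n$ times. Within an iteration I would compute the intersection-closure $H'$ by enumerating $M_H(S)$ for $S \in [V]^{\leq d+1}$ and keeping the distinct outputs. There are $O(n^{d+1})$ such subsets, and each $M_H(S)$ takes $O(||H||) = O(n^d)$ time by Theorem \ref{thm2}, so this step dominates at $O(n^{2d+1})$ per iteration; computing chain labels $l_{H'}$ and scanning $Z$ for a minimum-$i$ element is a DAG computation of smaller cost, and the grouping itself takes $O(||H||)$. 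Summing over at most $n$ iterations gives $O(n^{2d+2})$.

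The conceptual hurdle is the size bound, which weaves together Theorem \ref{thm1} (to cap chain lengths), the invariant imposed by the loop's exit condition (to cap edge sizes in the closure), the special status of $V$ as always being an edge of the closure (to convert an edge bound into a vertex bound), preservation of reducedness (so that Theorem \ref{thm2} applies to $H_r$), and finally Theorem \ref{thm2} itself (to convert a vertex bound into a total-size bound). Once these pieces are assembled, $|V| \leq k^d$ emerges cleanly and the target $O(k^{d^2})$ follows.
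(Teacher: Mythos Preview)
Your proposal is correct and follows essentially the same approach as the paper's proof: the same three-part decomposition (correctness via Lemmas~\ref{lem5}--\ref{lem6}, size bound via Theorem~\ref{thm1} to cap $L(H')$ and then Theorem~\ref{thm2} on $|V_r|\le k^d$, and running time via at most $n$ iterations each costing $O(n^{2d+1})$). The only cosmetic differences are that you spell out why reducedness is preserved (the paper asserts this without detail) and you attribute the $O(n^{2d+1})$ per-iteration cost to building the intersection closure rather than to computing the chain labels; neither affects the argument or the final bounds.
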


\begin{proof} Consider the execution of \textsc{KernelizeSetCover} on $(H,k)$. Suppose that the while loop executes $r$ steps, and for $s \in [0;r]$ let $H_s = (V_s,E_s)$ be the value of $H$ at the end of step $s$. We then have $H_0 = H$, and the algorithm returns $H_r$. A straightforward induction based on Lemmas \ref{lem5} and \ref{lem6} shows that at each step, we have $H_s$ $d$-flat, reduced and $(H_s,k)$ equivalent to $(H,k)$. Applying the induction hypothesis at the last step proves that $H_r$ is $d$-flat, reduced and $(H_r,k)$ is equivalent to $(H,k)$.

We now justify the kernel bound. Consider the last step of the while loop, we then have $H = H_r$. Let $H',Z$ as defined in Lines 3-4 in this step, with $H' = (V_r,E'_r)$. By Theorem \ref{thm1}, as $H_r$ is $d$-flat we have $L(H') \leq d+1$ and thus $H'$ contains the edge $V_r$ with $l_{H'}(V_r) \leq d+1$. Since $Z = \emptyset$, it follows that $|V_r| \leq k^d$. By Theorem \ref{thm2}, as $H_r$ is reduced and $d$-flat we have $|| H_r || = O(k^{d^2})$.

We finally justify the running time. We claim that a step $s$ of the while loop takes $O(n^{2d+1})$ time. Let $H'_s$ as defined in Line 3 in this step, with $H'_s = (V_s,E'_s)$. Since $H_s$ is $d$-flat and reduced, we have $H'_s$ $d$-flat and reduced by Theorem \ref{thm1}. Since $|V_s| \leq n$, it follows from Theorem \ref{thm2} that $|| H'_s || = O(n^d)$. Thus, $H'_s$ can be constructed in $O(n^{d+1})$ time. Next, computing the values $l_{H'_s}(e)$ for $e \in E'_s$ takes $O(n^{2d+1})$ time. All other operations can be performed in $O(n^{d+1})$ time. Since $r \leq n$, the algorithm takes $O(n^{2d+2})$ time in total. \qed
\end{proof}


\section{\label{sec4}Compression lower bound}

In this section, we show that \textsc{$d$-Flat Set Cover} is unlikely to have a kernel of size $O(k^{d^2-\epsilon})$. This is stated in Theorem \ref{thm7} below. We note that our result is phrased in terms of compressions, following \cite{BJK14}.

We first define an auxiliary problem called \textsc{$d$-Constraint-Cover}, where $d$ is a positive integer. Let $X$ be a set. We define inductively a notion of \emph{$i$-constraint over $X$}, for $i$ positive integer. 
\begin{itemize}
\item A \emph{$0$-constraint} over $X$ is an equality $C = (x = a)$ where $x \in X$ and $a \in \{0,1\}$. We define $V(C) = \{x\}$.
\item For $i > 0$, a \emph{$i$-constraint} over $X$ is a set $C$ of $(i-1)$-constraints over $X$, such that for $x,x' \in C$ distinct we have $V(x) \cap V(x') = \emptyset$. We define $V(C) = \cup_{x \in C} V(x)$.
\end{itemize}
The \emph{full set} of 0-constraints over $X$ is the set $\C^0 = \{ (x = a) : x \in X, a \in \{0,1\} \}$. A set $S \subseteq \C^0$ is an \emph{assignment} iff for each $x \in X$, there is a unique $a \in \{0,1\}$ such that $S$ contains $(x = a)$.

The \textsc{$d$-Constraint-Cover} problem is defined as follows. An instance $I$ consists of: (a) a parameter $k$, (b) a set of variables $X$ of size $d^2 k$, (c) the full set of $0$-constraints over $X$ denoted by $\C^0$, (d) a set of 1-constraints over $X$ of the form $\C^1 \subseteq [\C^0]^{\leq d}$, (e) a set of 2-constraints over $X$ of the form $\C^2 \subseteq [\C^1]^d$. A \emph{solution} for $I$ consists of: (a) an assignment $\K^0 \subseteq \C^0$ (of size $d^2 k$), (b) a set $\K^1 \subseteq \C^1$ of size $d k$, (c) a set $\K^2 \subseteq \C^2$ of size $k$, such that:
\begin{itemize}
\item[(i)] we have $\bigcup_{C \in \K^1} C = \K^0$;
\item[(ii)] we have $\bigcup_{C \in \K^2} C = \K^1$;
\item[(iii)] for $C \in \C^1$, we have $C \subseteq \K^0$ implies $C \in \K^1$.
\end{itemize}

We say that a 1-constraint $C \in \C^1$ is \emph{satisfied} by $\K^0$ iff $C \subseteq \K^0$; Point (iii) expresses that the elements of $\K^1$ are the only constraints of $\C^1$ satisfied by $\K^0$.
We observe that since $|\K^0| = d^2 k$ and $|\K^1| = d k$, it follows from Point (i) that the constraints in $\K^1$ have arity $d$ and that they form a partition of $\K^0$. In particular, this implies by Point (iii) that the constraints of $\C^1$ having arity less than $d$ cannot be satisfied in a solution for $I$. Similarly, since $|\K^1| = d k$ and $|\K^2| = k$, it follows from Point (ii) that the constraints in $\K^2$ form a partition of $\K^1$.

To obtain our compression lower bound for \textsc{$d$-Flat Set Cover}, we will give a reduction from \textsc{$d$-Constraint-Cover} leaning on the following result proved in the appendix.

\begin{theorem} \label{thm6} For every integer $d \geq 3$, \textsc{$d$-Constraint-Cover} has no compression of size $O(k^{d^2 - \epsilon})$, unless $\conp \subseteq \np / poly$.
\end{theorem}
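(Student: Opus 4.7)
The plan is to prove Theorem \ref{thm6} via a weak OR-cross-composition in the framework of \cite{BJK14}: starting from an NP-hard source problem $\Pi$, we take $t$ input instances of size $n$ and produce in polynomial time a single \textsc{$d$-Constraint-Cover} instance which is a yes-instance iff at least one input instance is, and whose parameter satisfies $k = O(t^{1/d^2} \cdot \mathrm{poly}(n))$. Such a composition yields the stated compression lower bound of $k^{d^2 - \epsilon}$ by standard arguments, under $\conp \not\subseteq \np/\mathrm{poly}$.

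The core of the construction is driven by the three-layer shape of a solution $(\K^0,\K^1,\K^2)$: an assignment to $d^2 k$ variables, grouped by $dk$ 1-constraints of arity exactly $d$ into $\K^1$, and then grouped by $k$ 2-constraints of arity $d$ into $\K^2$. Interpreting each $j \in [t]$ as a tuple in $[m]^{d^2}$ with $m = \lceil t^{1/d^2} \rceil$, we take $k = \Theta(m + n)$ and partition the $d^2 k$ variables into $d^2$ ``coordinate'' blocks (encoding a choice from $[m]^{d^2}$) together with a ``certificate'' block of $O(n)$ variables. The 1- and 2-constraints are engineered so that any valid solution must (a) choose a single coherent tuple $(a_1,\ldots,a_{d^2}) \in [m]^{d^2}$, thereby identifying some index $j^\star \in [t]$, and (b) set the certificate block to a valid certificate for $I_{j^\star}$. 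Here, condition (iii) of the problem definition plays a central role: by introducing ``forbidden-pattern'' 1-constraints of arity strictly less than $d$ (which can never appear in $\K^1$, since members of $\K^1$ must have arity exactly $d$), we rule out inconsistent coordinate selections as well as any deviation from the intended certificate structure.

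The main obstacle I expect is the tight accounting needed to keep the parameter at $O(t^{1/d^2} \cdot \mathrm{poly}(n))$ while simultaneously enforcing the coherence of the coordinate selection and the correctness of the certificate, since a slack of even a polynomial factor in $k$ would degrade the exponent in the lower bound. The most delicate part of the verification is the forward direction of the equivalence: showing that any valid $(\K^0,\K^1,\K^2)$ with $|\K^1| = dk$ and $|\K^2| = k$ decodes to a single well-defined index $j^\star$ together with a valid certificate for $I_{j^\star}$. This will rely on a careful combinatorial argument about how arity-$d$ 1-constraints can partition an assignment $\K^0$ of size $d^2 k$ in the presence of the forbidden-pattern restrictions, combined with the rigid partition structure that $\K^2$ imposes on $\K^1$. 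Once this forward direction is in place, the reverse direction and the parameter bound follow by construction, and Theorem \ref{thm6} is an immediate consequence of the framework of \cite{BJK14}.
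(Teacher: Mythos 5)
Your proposal sets up the right framework (a degree-$d^2$ weak cross-composition in the sense of \cite{BJK14} does imply the stated $O(k^{d^2-\epsilon})$ lower bound), but everything that actually carries the weight of the theorem is left unspecified, and the sketch as written runs into a concrete obstruction. The semantics of \textsc{$d$-Constraint-Cover} are very rigid: a solution is not an ``evaluation'' of constraints but a partition structure ($\K^1$ must consist of exactly $dk$ arity-$d$ constraints partitioning the assignment $\K^0$, grouped by exactly $k$ members of $\K^2$, with condition (iii) forbidding any further satisfied $1$-constraint). You propose to (a) select an index $j^\star\in[t]$ via $d^2$ coordinate blocks and (b) have a ``certificate block'' of $O(n)$ variables encode a witness for $I_{j^\star}$, with the constraints ``engineered'' to couple (a) and (b). But a $1$-constraint touches at most $d$ variables and a $2$-constraint at most $d^2$, so a single $2$-constraint that records the full coordinate choice (one variable from each of the $d^2$ blocks) has no room left to interact with certificate variables; and it is not explained how an arbitrary NP verifier for a generic NP-hard source problem could be simulated at all in this covering/partition language. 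This is precisely the crux, and your own text defers it (``the most delicate part of the verification is the forward direction''), so what you have is a plan rather than a proof.

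For comparison, the paper does not build a cross-composition from scratch: it gives a polynomial parameter transformation from \textsc{$s$-Dimensional Matching} with $s=cd^2$ (for $c$ chosen so that $d^2-(c+1)\epsilon<0$), invoking the known Dell--Marx incompressibility bound $O(|V|^{s-\epsilon})$ for that problem. The source problem is chosen so that its solution structure is itself matching/partition-like: each of the $l$ tuples is encoded by a row of $d^2$ blocks $X_{p,q,r}$, each block selecting an index in $[0;m-1]$ with $m=n^c$ (this $c$-fold packing of coordinates into one selector index is what makes the parameter $k=dlm\le dn^{c+1}$ small relative to $n^{s}$); membership in $S$ is enforced not by ``verifying a certificate'' but by only creating the global $2$-constraints $C^6$ for tuples of $S$, orthogonality is enforced by arity-$2$ incompatibility constraints that condition (iii) forbids from ever being satisfied, and the exact cardinalities $|\K^1|=dk$, $|\K^2|=k$ are balanced by the local constraints $C^3,C^4$. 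If you want to pursue your route, you would essentially have to rediscover this kind of gadgetry (and pick a source problem whose witnesses are expressible as exact covers), at which point you are reproving the Dell--Marx-style bound that the paper simply imports; as it stands, the missing construction is a genuine gap.
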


We reduce \textsc{$d$-Constraint-Cover} to \textsc{$d$-Flat Set Cover}. Consider an instance $I$ of \textsc{$d$-Constraint-Cover}, consisting of the parameter $k$, the set $X$ of size $d^2 k$, the set $\C^0$ of 0-constraints over $X$, the set $\C^1 \subseteq [\C^0]^{\leq d}$ of $1$-constraints over $X$, and the set $\C^2 \subseteq [\C^1]^d$ of $2$-constraints over $X$.

We construct the following instance $I' = (H',k')$ of \textsc{$d$-Flat Set Cover}. We have $H' = (V',E')$. The set $V'$ consists of: (a) a vertex $v^1_x$ ($x \in X$), (b) a vertex $v^2_{x,a}$ ($x \in X, a \in \{0,1\}$), (c) a vertex $v^3_C$ ($C \in \C^1$). The set $E'$ consists of:
\begin{itemize}
\item[(a)] for $x \in X, a \in \{0,1\}$, an edge $e^1_{x,a} = \{ v^1_x, v^2_{x,a} \} \cup \{ v^3_C : C \in \C^1 \text{ containing } (x = a) \}$;
\item[(b)] for $C \in \C^1$, an edge $e^2_C = \{ v^2_{x,a} : C \text{ contains } (x = a) \}$;
\item[(c)] for $C \in \C^2$, an edge $e^3_C = \{ v^3_{C'} : C' \in \C^1 \text{ s.t. $C$ contains $C'$ } \}$.
\end{itemize}
The parameter is $k' = (d^2 + d + 1) k$.

A \emph{$d$-square} in $H'$ is a pair $(W,F)$ where $W \subseteq V'$, $F \subseteq E'$, $W = \{w_1,\ldots,w_d\}$, $F = \{f_1,\ldots,f_d\}$, and: for each $(i,j) \in [d] \times [d]$ such that $i \leq 2$ or $j \leq 2$, we have $w_i \in f_j$. We say that $H'$ is \emph{$d$-square-free} if it contains no $d$-square.

\begin{lemma} \label{lem7} $H'$ is $d$-square-free.
\end{lemma}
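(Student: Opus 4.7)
The plan is to suppose for contradiction that $H'$ contains a $d$-square $(W,F)$ and to perform a structural case analysis on the types of $f_1,\ldots,f_d$. I first split the vertices of $H'$ into three types according to their index ($V_1 = \{v^1_x\}$, $V_2 = \{v^2_{x,a}\}$, $V_3 = \{v^3_C\}$) and partition the edges into types $E_1, E_2, E_3$ in the analogous way. The crucial structural facts I plan to use throughout are: every edge in $E_2$ is a subset of $V_2$ and every edge in $E_3$ is a subset of $V_3$ (only edges in $E_1$ can mix vertex types); distinct elements of any 2-constraint have pairwise disjoint variable sets; and a 1-constraint can contain at most one $0$-constraint on a given variable. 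The $d$-square condition immediately yields $|f_1 \cap f_2| \geq d$, since $w_1,\ldots,w_d$ all lie in both $f_1$ and $f_2$.

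The first step is to pin down the types of $f_1$ and $f_2$. If both are Type 2 or both are Type 3, then $|f_i| \leq d$ together with $|f_1 \cap f_2| \geq d$ forces $f_1 = f_2$, a contradiction. If one is in $E_2$ and the other in $E_3$, their intersection is empty. In the mixed cases, I would check directly that $|f_1 \cap f_2| \leq 1$: a Type 1 edge contains a single Type 2 vertex (handling $E_1 \cap E_2$), and the Type 3 vertices of a Type 1 edge $e^1_{x,a}$ that also belong to a Type 3 edge $e^3_D$ correspond to $C \in D$ with $x \in V(C)$, of which at most one exists by the disjoint-variable property of $D$. Hence, for $d \geq 2$, both $f_1$ and $f_2$ must be Type 1; write $f_1 = e^1_{x_1,a_1}$, $f_2 = e^1_{x_2,a_2}$. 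If $x_1 = x_2$, then $a_1 \neq a_2$, and the disjoint-variable rule for 1-constraints rules out any Type 3 vertex in $f_1 \cap f_2$, leaving only $v^1_{x_1}$ and forcing $d \leq 1$. So $x_1 \neq x_2$, which also kills the Type 1 and Type 2 vertices from $f_1 \cap f_2$, forcing $W \subseteq V_3$; write $w_i = v^3_{C_i}$.

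Next I would examine each remaining edge $f_j$ for $j \geq 3$, which must contain $w_1, w_2 \in V_3$. No $f_j$ can be Type 2. If some $f_j = e^3_D$ were Type 3, then $C_1, C_2 \in D$ would force $V(C_1)$ and $V(C_2)$ to be disjoint, yet $(x_1 = a_1) \in C_1 \cap C_2$ places $x_1$ in both. Thus every $f_j$ is Type 1, say $f_j = e^1_{x_j, a_j}$, and $w_1, w_2 \in f_j$ translates to $(x_j = a_j) \in C_1 \cap C_2$. The distinctness of the $d$ Type 1 edges provides $d$ distinct pairs $(x_j, a_j)$ and hence $d$ distinct $0$-constraints in $C_1 \cap C_2$; combined with $|C_1|, |C_2| \leq d$, this forces $C_1 = C_2$, contradicting $w_1 \neq w_2$.

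The main obstacle I anticipate is keeping the many type patterns organized without missing a combination. The same lever resolves them all: the disjoint-variable axiom of 1- and 2-constraints, together with the arity bound $|C| \leq d$ for $C \in \C^1$, collapses the intersection of a pair of edges of ``incompatible'' types down to at most one vertex, so that the only surviving configuration is the all-Type-1 one, where the arity bound then forces $C_1 = C_2$.
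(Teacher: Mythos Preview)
Your argument is correct and relies on exactly the same structural facts as the paper's proof: the arity bounds $|C|\le d$ for $C\in\C^1$ and $|C|=d$ for $C\in\C^2$, the disjoint-variable axiom for $i$-constraints, and the observation that only Type~1 edges meet more than one vertex class. The difference is purely organizational. The paper begins its case split on the \emph{vertex} side (which of $W_1,W_2,W_3$ meets $\{w_1,w_2\}$) and only afterwards distinguishes edge types inside the all-$W_3$ case; you instead start on the \emph{edge} side, using $|f_1\cap f_2|\ge d$ to eliminate every type pattern for $(f_1,f_2)$ except Type~1/Type~1 with distinct variables, and then read off $W\subseteq V_3$ as a consequence. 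Your route is a little more uniform---the single inequality $|f_1\cap f_2|\ge d$ drives all the early eliminations---whereas the paper's route makes the role of the special vertices $v^1_x$ and $v^2_{x,a}$ more visible. Both arrive at the same terminal contradiction $C_1=C_2$ in the all-Type~1 case.
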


\begin{proof} Suppose by contradiction that $H'$ contains a $d$-square $(W,F)$ with $W = \{w_1,\ldots,w_d\}$ and $F = \{f_1,\ldots,f_d\}$. Let $W_1$ be the set of vertices $w_i$ of the form $v^1_x$, let $W_2$ be the set of vertices $w_i$ of the form $v^2_{x,a}$, and let $W_3$ be the set of vertices $w_i$ of the form $v^3_C$. We have the following cases.

Case 1: $W_1 \neq \emptyset$. Suppose that $W_1$ contains vertex $w_i$. Since $w_i$ belongs to $f_1,f_2$, we have $f_1 = e^1_{x,a}$ and $f_2 = e^1_{x,1-a}$. But then $f_1 \cap f_2 = \{v^1_x\}$, contradiction.

Case 2: $W_1 = \emptyset$ and $w_1,w_2 \in W_2$. We then have $x_1,x_2 \in X$ and $a_1,a_2 \in \{0,1\}$ such that $w_i = v^2_{x_i,a_i}$. Since $f_1,f_2$ contain $w_1,w_2$, we have $C,C' \in \C^1$ such that $f_1 = e^2_C, f_2 = e^2_{C'}$. Then all vertices $w_i$ are in $W_2$, and for each $i \in [d]$ we have $x_i \in X$ and $a_i \in \{0,1\}$ such that $w_i = v^2_{x_i,a_i}$. But then we have $C = C' = \{ x_1 = a_1, \ldots, x_d = a_d \}$, contradiction.

Case 3: $W_1 = \emptyset$ and $w_1 \in W_3$. Suppose first that some vertex $w_i$ is in $W_2$. We then have $x \in X$ and $a \in \{0,1\}$ such that $w_i = v^2_{x,a}$. Since $e_1,e_2$ are incident to $w_1$ and $w_i$, we have $e_1 = e_2 = e^1_{x,a}$, contradiction. Suppose now that each vertex $w_i$ is in $W_3$. For each $i \in [d]$, we then have $C_i \in \C^1$ such that $w_i = v^3_{C_i}$. Let $F_1$ be the set of edges $f_i$ of the form $e^1_{x,a}$, and let $F_3$ be the set of edges $f_i$ of the form $e^3_C$. We have the following subcases.

Case 3.1: we have $F_1,F_3 \neq \emptyset$. Consider $f_i \in F_1$ and $f_j \in F_3$. We then have $f_i = e^1_{x,a}$ for $x \in X, a \in \{0,1\}$, and $f_j = e^3_C$ for $C \in \C^2$. Since $f_j$ contains $w_1,w_2$, it follows that $C_1,C_2 \in C$, and thus $V(C_1) \cap V(C_2) = \emptyset$. Since $f_i$ contains $w_1,w_2$, it follows that $C_1,C_2$ both contains $(x = a)$, contradiction.

Case 3.2: we have $F_1 = \emptyset$. We then have $f_1 = e^3_C$ and $f_2 = e^3_{C'}$, for $C,C' \in \C^2$. Since $C,C'$ are in $\C^2$, we must have $C = C' = \{ C_1,\ldots,C_d \}$, contradiction.

Case 3.3: we have $F_3 = \emptyset$. For each $i \in [d]$, we have $f_i = e^1_{x_i,a_i}$ for $x_i \in X,a_i \in \{0,1\}$. It follows that for each $i \in [d]$, $C_1$ and $C_2$ both contain $(x_i = a_i)$. Then, we must have $C_1 = C_2 = \{ x_1 = a_1, \ldots, x_d = a_d \}$, contradiction. \qed
\end{proof}

\begin{lemma} \label{lem8} Let $S$ be a cover of $H'$ of size at most $k'$. We then have $\K^0 \subseteq \C^0$ assignment, $\K^1 \subseteq \C^1$ of size $dk$, $\K^2 \subseteq \C^2$ of size $k$, and a partition of $S$ into $S_1,S_2,S_3$ such that: $S_1 = \{ e^1_{x,a} : (x = a) \notin \K^0 \}, S_2 = \{ e^2_C : C \in \K^1 \}$ and $S_3 = \{ e^3_C : C \in \K^2 \}$.
\end{lemma}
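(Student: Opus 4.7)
The plan is a tight three-layer counting argument which shows that the budget $k' = (d^2+d+1)k$ leaves no slack at any layer, thereby forcing the cover $S$ into the claimed canonical shape. First I partition $S = S_1 \cup S_2 \cup S_3$ by edge type ($S_i$ comprising the edges of type $e^i$). Since each $v^1_x$ lies in exactly the two edges $e^1_{x,0}, e^1_{x,1}$, I immediately get $|S_1| \geq d^2 k$; I will write $\alpha = |S_1| - d^2 k \geq 0$, $L = \{(x=a) : e^1_{x,a} \in S_1\}$, $N = \C^0 \setminus L$ of size $d^2k-\alpha$, and set $\K^1 = \{C \in \C^1 : e^2_C \in S_2\}$ and $\K^2 = \{C' \in \C^2 : e^3_{C'} \in S_3\}$. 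The crux is to force $\alpha = 0$.

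For the $v^2$ layer I partition $\K^1 = \K^1_s \cup \K^1_u$ according to whether $C \subseteq N$, with sizes $s$ and $u$. Every $v^2_{x,a}$ with $(x=a) \in N$ must lie on some $e^2_C$ with $(x=a) \in C$, which gives $\sum_{C \in \K^1} |C \cap N| \geq |N|$; bounding this sum from above by $d|\K^1| - u$ (using $|C| \leq d$ and the $-1$ correction $|C \cap N| \leq |C|-1$ that holds for each $C \in \K^1_u$) produces the key inequality $d|S_2| \geq d^2k - \alpha + u$. For the $v^3$ layer, I observe that each $C \in \K^1_s$ has $v^3_C$ uncovered by $S_1$ (since $C \cap L = \emptyset$) and by $S_2$ (no $e^2$ edge contains a $v^3$ vertex), so $v^3_C$ must lie on some $e^3_{C'} \in S_3$ with $C \in C'$; as $|C'| = d$ for every $C' \in \C^2$, this gives $d|S_3| \geq s$.

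Summing the three bounds $|S_1| \geq d^2 k + \alpha$, $d|S_2| \geq d^2 k - \alpha + u$, $d|S_3| \geq s$, using the identity $s + u = |S_2|$, and applying once more the weaker corollary $|S_2| \geq (d^2 k - \alpha)/d$ to the resulting expression, an elementary rearrangement gives
\[
|S| \geq (d^2 + d + 1) k + \alpha \cdot \frac{d^2 - d - 1}{d^2}.
\]
For $d \geq 2$ the coefficient of $\alpha$ is strictly positive, so the hypothesis $|S| \leq k'$ forces $\alpha = 0$ and turns every intermediate inequality into an equality. From these equalities I will read off that $\K^0 := N$ is an assignment of size $d^2 k$; that $u = 0$, making every $C \in \K^1$ satisfy $|C| = d$ and $C \subseteq \K^0$ with the $C$'s partitioning $\K^0$; and that the elements of $\K^2$ are $d$-blocks partitioning $\K^1$, with $|\K^1| = dk$ and $|\K^2| = k$. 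Plugging these identities back into the definitions of $S_1, S_2, S_3$ then reproduces exactly the three identities claimed by the lemma.

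The hard part will be extracting the $+u$ correction in the $v^2$-layer bound. The naive inequality $d|S_2| \geq d^2 k - \alpha$ alone is too weak to rule out $\alpha \geq 1$, and the final slack vanishes without the correction. Recognizing that every ``off-assignment'' constraint in $\K^1_u$ loses one unit of $v^2$-covering capacity to $L$, and that only the smaller set $\K^1_s$ (not all of $\K^1$) must be covered at the $v^3$ layer, is what ultimately makes the three-layer counting tight.
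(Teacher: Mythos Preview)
Your argument is correct and follows essentially the same three-layer counting strategy as the paper: partition $S$ by edge type, use the $v^1$-layer to bound $|S_1|$, use the $v^2$-layer with the crucial split of $\K^1$ into ``on-assignment'' and ``off-assignment'' constraints (your $s,u$ are exactly the paper's $q$ and $|S_2|-q$), and use the $v^3$-layer to bound $|S_3|$ in terms of $s$. Your execution is in fact cleaner than the paper's: where the paper introduces an auxiliary discretization parameter $i$ with $i(d-1)\le p<(i+1)(d-1)$ and argues in two stages (first $p=0$, then a separate argument for $s=0$), you obtain the single closed-form slack inequality $|S|\ge k'+\alpha(d^2-d-1)/d^2$, which finishes the first stage directly. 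One small remark: because you apply the weakened corollary $|S_2|\ge(d^2k-\alpha)/d$ (dropping the $+u$) to reach that final inequality, equality in your displayed bound does not \emph{by itself} force $u=0$; you should either keep the $u$ term to get the sharper $|S|\ge k'+\alpha(d^2-d-1)/d^2+u/d^2$, or, after concluding $\alpha=0$, revisit the undiluted inequalities $d|S_2|\ge d^2k+u$ and $d|S_3|\ge |S_2|-u$ together with $|S_2|+|S_3|=(d+1)k$ to squeeze out $u=0$ and $|S_2|=dk$. Either fix is routine.
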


\begin{proof} Among the edges of $S$, we let $S_1$ denote the edges of the form $e^1_{x,a}$, $S_2$ denote the edges of the form $e^2_C$, and $S_3$ denote the edges of the form $e^3_C$.

For $s \in \{1,2\}$, we let $X_s$ denote the set of elements $x \in X$ such that $|S \cap \{e^1_{x,0},e^1_{x,1}\} | = s$. Since $v^1_x$ can only be covered by an edge $e^1_{x,a}$, it follows that $X = X_1 \cupdot X_2$. Let $p = |X_2|$. Since $d^2 k = |X| = |X_1| + |X_2|$, it follows that $|S_1| = d^2 k + p$ and $|X_1| = d^2 k - p$. Thus, we have $|S_2| + |S_3| \leq (d + 1) k - p$ and $p \leq (d+1) k$. Let $Z$ be the set of vertices $v^2_{x,a}$ ($x \in X, a \in \{0,1\}$) such that $e^1_{x,a} \notin S$, and let $Z'$ be the set of vertices $v^2_{x,a}$ ($x \in X, a \in \{0,1\}$) such that $e^1_{x,a} \in S$. We have $|Z| = |X_1| = d^2 k - p$. Consider an integer $i$ such that $i (d-1) \leq p < (i+1) (d-1)$. Since the edges of $S_2$ cover $Z$, we must have $|S_2| \geq d k - i$. We then have an integer $s$ such that $|S_2| = d k + s$, with $-i \leq s \leq k - p$. Let $S'_2$ be the set of edges $f \in S_2$ such that $f \cap Z' = \emptyset$, and let $q = |S'_2|$. Observe that each edge $f \in S_2 - S'_2$ covers at most $d-1$ vertices of $Z$.

Let $v = q + (d - 1) (d k + s)$ and let $v' = d^2 k - p$. Since the edges of $S_2$ cover $Z$, we must have $v - v' \geq 0$. It follows that:
\begin{align*}
q &\geq d^2 k - p - (d - 1) (d k + s)\\
&= d (k - s) + (s - p)\\
&> d (k - s) + (-i - (i+1) (d-1))\\
&= d (k - s - (i+1)) + 1
\end{align*}
Now, for each edge $e^2_C \in S'_2$, we must have $v^3_C$ covered by an edge $e^3_{C'} \in S_3$. If we had $|S_3| \leq k - s - (i+1)$, we would obtain $q \leq d (k - s - (i+1))$. Thus, we have $|S_3| \geq k - s - i$, and it follows that: $|S| = |S_1| + |S_2| + |S_3| \geq (d^2 k + p) + (d k + s) + (k - s - i) = k' + (p - i)$.
Since $|S| \leq k'$ by assumption, we have $p = i$, which is possible only if $p = 0$. Hence, we have $X_1 = X$, and thus there exists $\K^0 \subseteq \C^0$ assignment such that $S_1 = \{ e^1_{x,a} : (x = a) \notin \K^0 \}$. We then have $|S_2| = d k + s$ with $0 \leq s \leq k$. Suppose that $s > 0$. It follows that $q \geq d (k - s) + s > d (k - s)$, and we must have $|S_3| > k - s$. But then $|S| = |S_1| + |S_2| + |S_3| > d^2 k + (d k + s) + (k - s) = k'$, contradiction. Thus, we must have $s = 0$.

We obtain that $|S_1| = d^2 k, |S_2| = d k$ and $|S_3| = k$. We then find $\K^1 \subseteq \C^1$ of size $dk$ such that $S_2 = \{ e^2_C : C \in \K^1 \}$, and $\K^2 \subseteq \C^2$ of size $k$ such that $S_3 = \{ e^3_C : C \in \K^2 \}$.
\qed
\end{proof}

\begin{theorem} \label{thm7} For every integer $d \geq 3$, \textsc{$d$-Flat Set Cover} has no compression of size $O(k^{d^2 - \epsilon})$, unless $\conp \subseteq \np / poly$.
\end{theorem}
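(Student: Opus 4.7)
The plan is to apply Theorem \ref{thm6} via the map $I \mapsto I'$ constructed above, which is a polynomial-time reduction from \textsc{$d$-Constraint-Cover} to \textsc{$d$-Flat Set Cover} with linear parameter blow-up $k' = (d^2+d+1)k$. Once this is established, any compression of \textsc{$d$-Flat Set Cover} of size $O(k^{d^2-\epsilon})$ composed with our reduction would produce a compression of \textsc{$d$-Constraint-Cover} of size $O((k')^{d^2-\epsilon}) = O(k^{d^2-\epsilon})$, contradicting Theorem \ref{thm6}. Three things must be verified: (a) $H'$ is $d$-flat, (b) $H'$ is reduced, and (c) $I$ has a solution if and only if $H'$ admits a cover of size at most $k'$.

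For (a), I would combine Lemma \ref{lem7} with a short argument that $d$-square-freeness implies $(d+1)$-semi-ladder-freeness. Proving the latter by contraposition, suppose $(\{w_0,\ldots,w_{d+1}\},\{f_0,\ldots,f_{d+1}\})$ is a $(d+1)$-semi-ladder in $H'$. Then setting $w'_i = w_{i-1}$ for $i \in [d]$, together with $f'_1 = f_d$, $f'_2 = f_{d+1}$, and $f'_j = f_{j-1}$ for $j \in \{3,\ldots,d\}$, yields a $d$-square: the condition $w'_i \in f'_j$ amounts to $i-1$ being strictly less than the original semi-ladder index of $f'_j$, which is at least $2$ in general and at least $d$ when $j \leq 2$, so the required inequality holds whenever $i \leq 2$ or $j \leq 2$. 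Condition (b) follows by observing that $v^1_x$ belongs only to the two edges $e^1_{x,0}$ and $e^1_{x,1}$, so $\bigcap_{e \in E'} e = \emptyset$.

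For the forward direction of (c), given a solution $(\K^0,\K^1,\K^2)$ of $I$, set $S_1 = \{e^1_{x,a} : (x=a) \notin \K^0\}$, $S_2 = \{e^2_C : C \in \K^1\}$, $S_3 = \{e^3_C : C \in \K^2\}$. Then $|S_1 \cup S_2 \cup S_3| = d^2 k + dk + k = k'$, and coverage of each vertex type follows from properties (i)--(iii) together with the fact that $\K^0$ is an assignment. For the reverse direction, Lemma \ref{lem8} extracts from any cover of size at most $k'$ candidate sets $\K^0,\K^1,\K^2$ of the required sizes, and it remains to verify (i)--(iii). Since each $v^2_{x,a}$ with $(x=a) \in \K^0$ is not covered by $S_1$, it must lie in some $e^2_C \in S_2$, giving $\K^0 \subseteq \bigcup_{C \in \K^1} C$; combined with $|\K^1| = dk$ and $|C| \leq d$ for $C \in \C^1$, this forces equality and yields (i). A parallel argument on $v^3_C$ for $C \in \K^1$, whose $S_1$-coverage is blocked by (i), gives (ii). Finally, if $C \in \C^1$ satisfies $C \subseteq \K^0$, then $v^3_C$ cannot be covered by $S_1$ and must be covered by some $e^3_{C'} \in S_3$, placing $C \in C' \in \K^2$ and hence $C \in \K^1$ by (ii); this establishes (iii). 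The main obstacle is executing these tight counting arguments cleanly; the rest is bookkeeping.
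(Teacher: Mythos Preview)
Your proposal is correct and follows essentially the same route as the paper's proof: construct $I'$ from $I$, verify $d$-flatness via Lemma~\ref{lem7}, and establish equivalence using Lemma~\ref{lem8} together with a check of properties (i)--(iii). You supply a couple of details the paper leaves implicit---the argument that $d$-square-freeness implies $(d+1)$-semi-ladder-freeness and the observation that $H'$ is reduced---but otherwise the structure and the key counting steps are the same.
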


\begin{proof} Given an instance $I$ of \textsc{$d$-Constraint-Cover}, we construct $I' = (H',k')$ instance of \textsc{$d$-Flat Set Cover} as above. By Lemma \ref{lem7}, $H'$ is $d$-square free, and thus $H'$ is $d$-flat. We show that the instances $I$ and $I'$ are equivalent. By Theorem \ref{thm6}, this implies that \textsc{$d$-Flat Set Cover} has no compression of size $O(k^{d^2 - \epsilon})$ unless $\conp \subseteq \np / poly$. 

Suppose that $I$ is a positive instance. We then find a solution consisting of an assignment $\K^0 \subseteq \C^0$, a set $\K^1 \subseteq \C^1$ of size $dk$, and a set $\K^2 \subseteq \C^2$ of size $k$.  We define the following sets: $S_1 = \{ e^1_{x,a} : x \in X \text{ and } (x = a) \notin \K^0 \}$, $S_2 = \{ e^2_C : C \in \K^1 \}$ and $S_3 = \{ e^3_C : C \in \K^2 \}$. We let $S = S_1 \cup S_2 \cup S_3$. Clearly, $|S| = k'$. We show that $S$ is a cover of $H'$. Consider a vertex $z \in V'$. If $z = v^1_x$, it is covered by the edge $e^1_{x,a} \in S_1$. If $z = v^2_{x,a}$, it is covered either by the edge $e^1_{x,a} \in S_1$ (if $(x = a) \notin \K^0$) or by an edge $e^2_C \in S_2$ (if $(x = a) \in \K^0$, by Point (i)). Suppose that $z = v^3_C$ for some $C \in \C^1$. If $C \nsubseteq \K^0$, then $C$ contains a 0-constraint $(x = a) \notin \K^0$; $z$ is then covered by the edge $e^1_{x,a} \in S_1$. If $C \subseteq \K^0$, we have $C \in \K^1$ by Point (iii); we then find $C' \in \K^2$ containing $C$ by Point (ii), and thus $z$ is covered by $e^3_{C'} \in S_3$.

Suppose that $I'$ is a positive instance. We then find $S$ cover of $H'$ of size at most $k'$. By Lemma \ref{lem8}, we obtain $\K^0 \subseteq \C^0$ assignment, $\K^1 \subseteq \C^1$ of size $dk$, $\K^2 \subseteq \C^2$ of size $k$, and a partition of $S$ into $S_1,S_2,S_3$ as stated. We show that the sets $\K^i$ form a solution for $I$. 

Let us show Point (i). Since $|\K^0| = d^2 k$ and $|\K^1| = d k$, it suffices to show that $\K^0 \subseteq \bigcup_{C \in \K^1} C$. Consider a constraint $(v = a) \in \K^0$. Since $e^1_{x,a} \notin S_1$, $v^2_{x,a}$ must be covered by an edge $e^2_C \in S_2$. We obtain $C \in \K^1$ containing $(v = a)$.

Let us show Point (ii). Since $|\K^1| = d k$ and $|\K^2| = k$, it suffices to show that $\K^1 \subseteq \bigcup_{C \in \K^2} C$. Consider a constraint $C \in \K^1$. By Point (i), we have $C \subseteq \K^0$. It follows that $v^3_C$ cannot be covered by an edge $e^1_{x,a} \in S_1$, and thus it must be covered by an edge $e^3_{C'} \in S_3$. We obtain $C' \in \K^2$ containing $C$.

Let us show Point (iii). Consider $C \in \C^1$ such that $C \subseteq \K^0$. It follows that $v^3_C$ cannot be covered by an edge of $S_1$. Thus, $v^3_C$ must be covered by an edge $e^3_{C'} \in S_3$. We then have $C' \in \K^2$ containing $C$. By Point (ii), we conclude that $C' \in \K^1$.
\qed
\end{proof}

\section{\label{sec5}Concluding remarks}

We have introduced the notion of \emph{$d$-flat hypergraph} and we have given an FPT algorithm and a kernelization for \textsc{Set Cover} on $d$-flat hypergraphs. We were also able to prove the tightness of our kernel size under standard complexity-theoretic assumptions.

An obvious question arising from our work concerns the tightness of our algorithms. Regarding Algorithm \ref{algo1}, it may be possible to rule out a $2^{o(k \log k)} ||H||^c$ running time using the proof technique of \cite{LMS18}. Regarding Algorithm \ref{algo2}, it seems unlikely that its running time could be improved to $f(d) n^{o(d)}$, although it is unclear how to rule this out with known techniques. Already improving it, for instance to $O(n^{d+1})$, seems difficult: to compute the values $l_{H'}(e)$, it seems necessary to examine every pair of edges of $H'$, which number can be $\Theta(n^d)$.

An interesting direction for future work would be to identify larger classes of hypergraphs for which the \textsc{Set Cover} problem is fixed-parameter tractable / kernelizable when parameterized by the solution size. A first goal would be to settle this question for the class of $d$-ladder-free hypergraphs; to the best of our knowledge, this is open already for $d = 3$ and is likely to require deeper structural insights on these classes. A more distant goal would be to seek a dichotomy for the parameterized complexity of \textsc{Set Cover} in $H$-free hypergraphs; this question seems ambitious but not unreasonable, given the progress in dichotomies for \textsc{Independent Set} \cite{GLMPPR23} or \textsc{Dominating Set} \cite{K90}.

\bibliography{Submission}
\bibliographystyle{ieeetr}

\section{Appendix}

In this section, we prove Theorem \ref{thm6} by giving a parameterized reduction from \textsc{$s$-Dimensional Matching}. We first define the problem. Let $V$ be a ground set and two tuples $t,t' \in V^s$; we say that $t,t'$ are \emph{orthogonal} iff $t[i] \neq t'[i]$ for each $i \in [s]$. A set $M \subseteq V^s$ is a \emph{matching} iff the tuples in $M$ are pairwise orthogonal. The \textsc{$s$-Dimensional Matching} problem takes a ground set $V$, a set $S \subseteq V^s$ and an integer $l$, and asks whether $S$ contains a matching of size $l$. We have the following compression lower bound for the problem due to \cite{DM12}.

\begin{theorem} \label{thm8} For every integer $s \geq 3$, \textsc{$s$-Dimensional Matching} has no compression of size $O(|V|^{s-\epsilon})$, unless $\conp \subseteq \np / poly$.
\end{theorem}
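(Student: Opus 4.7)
The plan is to derive Theorem \ref{thm8} via the weak-composition framework of Dell and van Melkebeek, as applied in \cite{DM12}. The strategy is to build a polynomial-parameter transformation that packs many small instances of a canonical $\np$-hard source problem into a single \textsc{$s$-Dimensional Matching} instance whose ground set $V$ is much smaller than the total input size. If \textsc{$s$-Dimensional Matching} admitted a compression of bitsize $O(|V|^{s-\epsilon})$, then composing this compression with the transformation would shrink $N$ source instances to only $o(N)\cdot\mathrm{poly}(n)$ bits; by the distillation arguments of Fortnow--Santhanam, refined by Dell--van Melkebeek into a cross-composition-style tool, such a collapse forces $\conp \subseteq \np/\mathrm{poly}$.

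Concretely, I would fix an $\np$-hard source problem with witness size $n$ (for instance, a suitable colored variant of \textsc{CNF-Sat}), take $N = q^{\,s}$ instances indexed by tuples $(i_1,\ldots,i_s) \in [q]^s$, and construct a ground set $V$ of size roughly $O(q\cdot 2^{n/s})$ partitioned into $s$ color classes $V_1,\ldots,V_s$. Each candidate tuple $(v_1,\ldots,v_s) \in V_1\times\cdots\times V_s$ is designed to encode both the index $(i_1,\ldots,i_s)$ of a source instance and a purported witness for it, with $v_j$ carrying the $j$th coordinate of the index together with a fragment of the witness. The set $S \subseteq V^s$ would consist exactly of those tuples whose encoded index/witness pair certifies the corresponding source instance. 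The orthogonality condition in the definition of a matching is then used to prevent two different tuples from reusing any single index coordinate, so that a matching of size $\ell$ (calibrated to the construction) exists iff at least one source instance is positive.

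The main obstacle is the second step: engineering the gadget so that the exponent in $|V|$ comes out \emph{tight} at $s-\epsilon$. A naive product construction only reaches $|V|^{s-O(1/\log n)}$, which does not suffice. The sharp bound relies on a combinatorial packing in the Dell--Marx style, replacing the plain index set $[q]^s$ by the rows of a suitable error-correcting code or a Sidon-like design, so that distinct source instances remain distinguishable inside the matching even though many of them share coordinates in each color class. Verifying that the resulting gadget simultaneously (a) preserves the OR-semantics of the composition and (b) respects orthogonality, so that matchings of the threshold size correspond bijectively to positive witnesses of one source instance, is the most delicate step; once this is established, Theorem \ref{thm8} follows by plugging the construction into the standard weak-composition machinery for compression lower bounds.
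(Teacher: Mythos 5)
You should first note that the paper does not prove Theorem \ref{thm8} at all: it is imported as a known result of Dell and Marx \cite{DM12} and used purely as a black box in the appendix to establish Theorem \ref{thm6}. Your proposal is therefore an attempt to reprove the cited result, and as written it has a genuine gap rather than being a complete alternative proof. The weak-composition and distillation machinery you invoke (Fortnow--Santhanam, Dell--van Melkebeek) is indeed the right framework, but it is also the routine part. The entire mathematical content of the theorem is the construction you defer: a polynomial-time encoding of $N$ source instances into a single \textsc{$s$-Dimensional Matching} instance whose ground set has size roughly $N^{1/s}\cdot\mathrm{poly}(n)$, together with a threshold $\ell$ such that a matching of size $\ell$ exists if and only if some source instance is positive. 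You explicitly flag exactly this as ``the main obstacle,'' gesture at ``an error-correcting code or Sidon-like design,'' and then do not construct the gadget, do not prove the OR-semantics, and do not verify that orthogonality enforces the intended correspondence at the calibrated threshold. Since this is precisely where the tight exponent $s-\epsilon$ is won or lost, the proposal does not establish the statement.

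There is also a concrete parameter mismatch in the sketch. You let each vertex carry a witness fragment of length about $n/s$, giving $|V| = \Theta\bigl(q\cdot 2^{n/s}\bigr)$; but a compression-transferring composition must run in time polynomial in the total input length $N\cdot\mathrm{poly}(n)$, and $2^{n/s}$ is superpolynomial in the source size unless $N$ is taken exponentially large in $n$, which the framework does not allow (the number of composed instances must remain polynomially tied to the instance sizes for the final size accounting to go through). In constructions of the Dell--Marx type, the ground set grows like $N^{1/s}$ times a polynomial in the source size, and witnesses are verified through membership of tuples in $S$ rather than enumerated inside the vertex set. If you want a self-contained proof you would need to redo this bookkeeping and then actually supply and verify the packing gadget; otherwise the honest route --- and the one the paper takes --- is simply to cite \cite{DM12}.
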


The proof of Theorem \ref{thm6} proceeds by contradiction. We suppose that there exists an integer $d \geq 3$ and a real $\epsilon > 0$ such that \textsc{$d$-Constraint-Cover} has a compression $\A$ of size $O(k^{d^2 - \epsilon})$. We choose an integer $c$ such that $d^2 - (c+1) \epsilon < 0$, and we let $s = c d^2$. We give a polynomial-time reduction that maps an instance $I = (V,S,l)$ of \textsc{$s$-Dimensional Matching} to an instance $I'$ of \textsc{$d$-Constraint-Cover}; composing this reduction with algorithm $\A$ will then yield a compression of instance $I$ to an instance of size $|V|^{s-\epsilon'}$, thus implying $\conp \subseteq \np / poly$ by Theorem \ref{thm8}.

Consider an instance $I = (V,S,l)$ of \textsc{$s$-Dimensional Matching}, where $S \subseteq V^s$. We construct an instance $I'$ of \textsc{$d$-Constraint-Cover} as follows. Let $n = |V|$ and $m = n^c$. By padding, we may assume that $n \equiv 1~(d)$ and $m \equiv 1~(d)$. In the following, we will consider a fixed bijection $\Phi : [0;m-1] \rightarrow V^c$. We define the parameter $k = d l m$. The set $X$ has size $d^3 l m$ and is partitioned in sets $X_p$ ($p \in [l]$), each of size $d^3 m$. A set $X_p$ is partitioned in sets $X_{p,q,r}$ ($q,r \in [d]$), each of size $d m$. A set $X_{p,q,r}$ contains the vertices $x_{p,q,r,i,j}$ ($i \in [0;m-1], j \in [d]$).\\

We now describe the construction of the sets of constraints $\C^i$ involved in instance $I'$. We let $\C^0$ be the full set of 0-constraints over $X$. To construct $\C^1$ and $\C^2$, we start with $\C^1 = \C^2 = \emptyset$, and we add the following constraints.

First type: we add a set of \emph{incompatibility} constraints. Fix $q,r \in [d]$. For each $p \in [l]$, $i,i' \in [0;m-1]$ distinct, $j,j' \in [d]$, we add to $\C^1$ the constraint $C^1_{p,q,r,i,i',j,j'} = \{ x_{p,q,r,i,j} = 1, x_{p,q,r,i',j'} = 1 \}$. For each $p,p' \in [l]$ distinct, $i,i' \in [0;m-1]$ such that $\Phi(i), \Phi(i')$ are not orthogonal, $j,j' \in [d]$, we add to $\C^1$ the constraint $C^2_{p,p',q,r,i,i',j,j'} = \{ x_{p,q,r,i,j} = 1, x_{p',q,r,i',j'} = 1 \}$.

Second type: we add a set of \emph{local} constraints inside each block $X_{p,q,r}$. Fix $p \in [l]$ and $q,r \in [d]$. For each $i \in [0;m-1]$, we add to $\C^1$ the constraint $C^3_{p,q,r,i} = \{ x_{p,q,r,i,j} = 0 : 1 \leq j \leq d \}$. For each $i \in [0;m-1]$, we add to $\C^2$ the constraint $C^4_{p,q,r,i} = \{ C^3_{p,q,r,(i+z) \mod m} : 0 \leq z < d \}$.

Third type: we add a set of \emph{global} constraints. Fix $p \in [l]$, $q \in [d]$, $t \in V^{c d}$ and $j \in [d]$. Consider the factorization $t = t_1 \ldots t_d$ where $t_1,\ldots,t_d \in V^c$. For each $r \in [d]$, let $i_r = \Phi^{-1}(t_r)$. We add to $\C^1$ the constraint $C^5_{p,q,t,j} = \{ x_{p,q,r,i_r,j} = 1 : r \in [d] \}$. Fix $p \in [l]$, $t \in V^s$ and $j \in [d]$. Consider the factorization $t = t_1 \ldots t_d$ where $t_1,\ldots,t_d \in V^{c d}$. We add to $\C^2$ the constraint $C^6_{p,t,j} = \{ C^5_{p,q,t_q,j} : q \in [d] \}$.\\

The intuition behind the reduction is as follows. We can view the blocks $X_{p,q,r}$ as arranged on a grid, where each column is indexed by $(q,r) \in [d] \times [d]$ and each row is indexed by $p \in [l]$. The row of index $p$ encodes the choice of a tuple $t_p \in V^s$, such that $M = \{ t_1,\ldots,t_l \}$ is a matching included in $S$. A given tuple $t_p$ is decomposed in factors $t_{p,q,r} \in V^c$, for $q,r \in [d]$. Each factor $t_{p,q,r}$ is encoded by the choice of an index $i \in [0;m-1]$ inside block $X_{p,q,r}$, such that $\Phi(i) = t_{p,q,r}$.

Consider a solution for $I'$, consisting of assignment $\K^0 \subseteq \C^0$, and of sets $\K^1 \subseteq \C^1$ and $\K^2 \subseteq \C^2$. We need to ensure that for each block $X_{p,q,r}$, we choose a $t \in V^c$ such that if $i = \Phi(t)$ then $\K^0$ contains $(x_{p,q,r,i,j} = 1)$ ($j \in [d]$) and $(x_{p,q,r,i',j} = 0)$ ($i' \neq i, j \in [d]$). We also need to ensure that for two blocks in the same column, i.e. $X_{p,q,r}$ and $X_{p',q,r}$, the tuples chosen for these two blocks are orthogonal. This is the role of the constraints of the first type. Note that these constraints cannot be satisfied, as the satisfied constraints have arity $d > 2$.

Let us fix $p \in [l]$ and suppose that we have chosen a tuple $t_{p,q,r} \in V^c$ for each $q,r \in [d]$. For each $q \in [d]$, let $t_{p,q} = t_{p,q,1} \ldots t_{p,q,d}$, and let $t_p = t_{p,1} \ldots t_{p,d}$. For $i \in \{0,1\}$, let $X^i_p$ be the set of vertices $x \in X_p$ such that $\K^0$ contains $(x = i)$. To cover the vertices in $X_p$, we proceed as in Figure \ref{fig1}. We cover the vertices in $X^0_p$ using the constraints $C^3_{p,q,r,i}$ and $C^4_{p,q,r,i}$. We cover the vertices in $X^1_p$ using the constraints $C^5$ arising from the tuples $t_{p,q}$, and the constraints $C^6$ arising from the tuple $t_p$. In particular, each such constraint will ensure that $t_p$ is an element of $S$.\\

\begin{figure}[hbt]
\center
\includegraphics[scale=0.3]{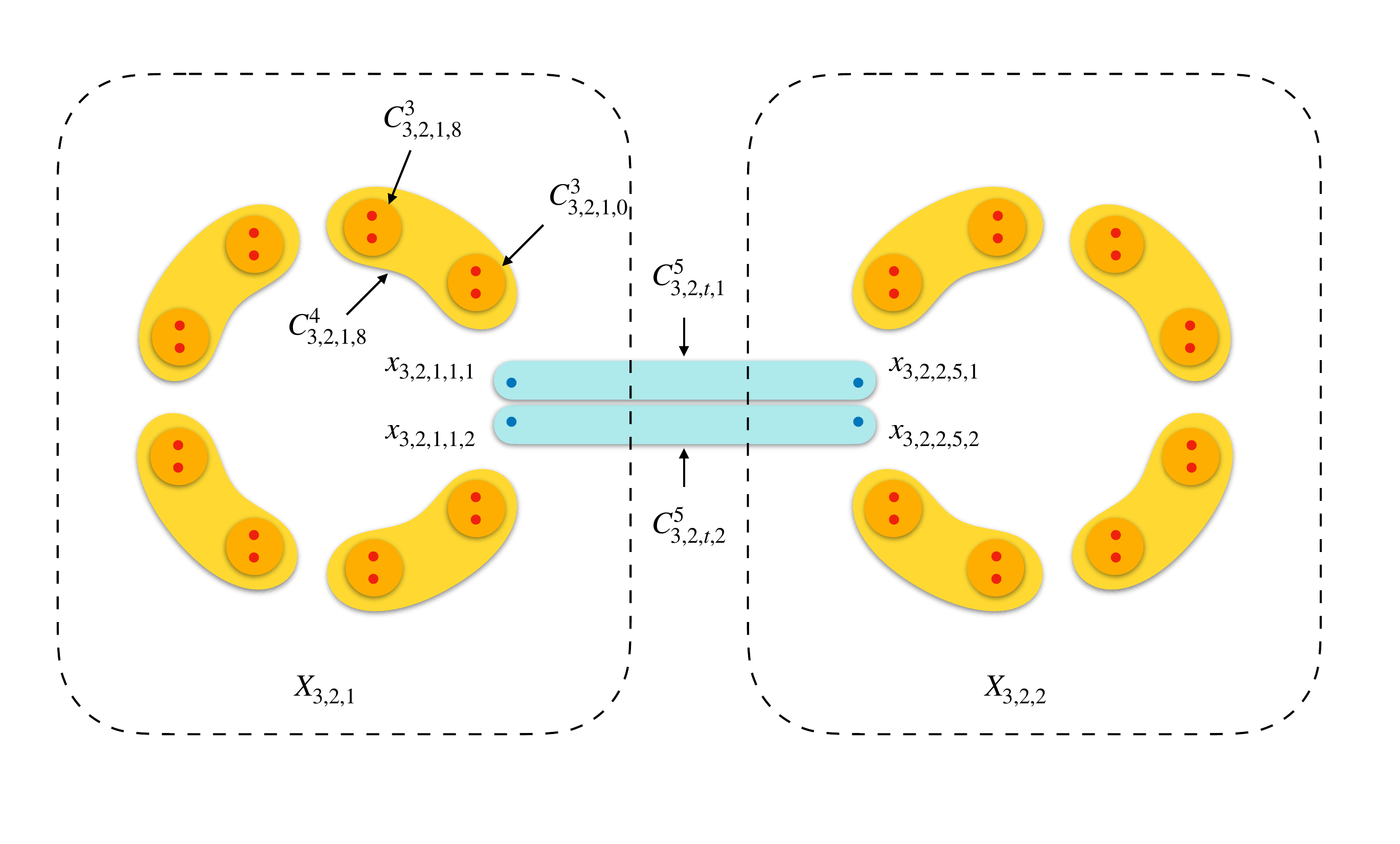}
\caption{An illustration of the construction for blocks $X_{3,2,1}$ and $X_{3,2,2}$. We assume that $d = 2$, $c = 1$, $|V| = 9$ and $t = [1,5]$. In a solution, the red vertices are labelled 0, the blue vertices are labelled 1, the orange and blue constraints are in $\K^1$, and the yellow constraints are in $\K^2$.}
\label{fig1}
\end{figure}

The following two lemmas prove the equivalence of the instances $I$ and $I'$. 

\begin{lemma} \label{lem9} If $I$ is a positive instance, then $I'$ is a positive instance.
\end{lemma}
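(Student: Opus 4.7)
The plan is to build the solution $(\K^0, \K^1, \K^2)$ for $I'$ directly from the matching $M = \{t_1,\ldots,t_l\} \subseteq S$, following the intuition sketched before the lemma. For each $p \in [l]$, I factor $t_p \in V^s$ into $d^2$ blocks $t_{p,q,r} \in V^c$ ($q,r \in [d]$), and set $i_{p,q,r} = \Phi^{-1}(t_{p,q,r}) \in [0;m-1]$. I define the assignment $\K^0$ by placing $(x_{p,q,r,i,j} = 1)$ in it when $i = i_{p,q,r}$, and $(x_{p,q,r,i,j} = 0)$ in it otherwise, so exactly one ``row'' $i$ of each block $X_{p,q,r}$ is labeled $1$.

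The set $\K^1$ will consist of (a) every constraint $C^3_{p,q,r,i}$ with $i \neq i_{p,q,r}$, which collectively cover all zero-labeled vertices of the blocks, and (b) every constraint $C^5_{p,q,t_{p,q},j}$ with $t_{p,q} = t_{p,q,1}\cdots t_{p,q,d}$, for $p \in [l]$ and $q,j \in [d]$, which collectively cover all one-labeled vertices by taking a single layer $j$ across the blocks $X_{p,q,1},\ldots,X_{p,q,d}$ at a time. The set $\K^2$ will consist of (a) for each triple $(p,q,r)$, a cyclic partition of $[0;m-1]\setminus\{i_{p,q,r}\}$ into $(m-1)/d$ groups of $d$ consecutive indices modulo $m$, encoded by the corresponding $C^4_{p,q,r,\cdot}$ constraints --- this is well-defined since $m \equiv 1 \pmod d$ --- and (b) the constraints $C^6_{p,t_p,j}$ for $p \in [l]$ and $j \in [d]$, which regroup the $C^5$-constraints according to the full tuples $t_p$. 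A direct count will confirm $|\K^0| = d^2 k$, $|\K^1| = dk$, and $|\K^2| = k$.

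Conditions (i) and (ii) will then be immediate from the construction, since the $C^3$- and $C^5$-constraints partition $\K^0$, and the $C^4$- and $C^6$-constraints partition $\K^1$. The substantive part of the proof is condition (iii): every $C \in \C^1$ with $C \subseteq \K^0$ must already lie in $\K^1$. For $C$ of type $C^3$ or $C^5$, the condition $C \subseteq \K^0$ will force the parameters of $C$ ($i \neq i_{p,q,r}$, respectively $t = t_{p,q}$) to coincide with the ones we chose, so $C$ belongs to $\K^1$ by construction. The delicate case --- and the main obstacle --- is ruling out the incompatibility constraints of types $C^1$ and $C^2$: a constraint $C^1_{p,q,r,i,i',j,j'} \subseteq \K^0$ would demand two distinct row indices $i, i'$ both equal to $i_{p,q,r}$, which is impossible; and a constraint $C^2_{p,p',q,r,i,i',j,j'} \subseteq \K^0$ would force $\Phi(i) = t_{p,q,r}$ and $\Phi(i') = t_{p',q,r}$ to be non-orthogonal, contradicting the fact that $t_p$ and $t_{p'}$ are orthogonal in $M$ (a property inherited by their coordinate-$r$ factors in the $q$-th $cd$-block). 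This last contradiction is precisely where the matching hypothesis on $M$ is used.
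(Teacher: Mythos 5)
Your construction coincides with the paper's: the same assignment $\K^0$, the same choice of $C^3$- and $C^5$-constraints for $\K^1$ and of $C^4$- and $C^6$-constraints for $\K^2$ (including the cyclic grouping of the indices other than $i_{p,q,r}$, which works because $m \equiv 1 \pmod d$), and the same size counts. You even verify Point (iii) in more detail than the paper does — in particular ruling out the incompatibility constraints $C^1$ and $C^2$ via the matching hypothesis — whereas the paper simply states that (i)--(iii) are easily checked, so the proposal is correct and follows essentially the same route.
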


\begin{proof} Suppose that we have $M = \{ t_1,\ldots,t_l \}$ matching included in $S$. For each $p \in [l]$, suppose that $t_p = t_{p,1} \ldots t_{p,d}$ with $t_{p,q} \in V^{c d}$ for each $q \in [d]$. For each $p \in [l], q \in [d]$, suppose that $t_{p,q} = t_{p,q,1} \ldots t_{p,q,d}$ with $t_{p,q,r} \in V^c$ for each $r \in [d]$. 

We construct the following solution of $I'$. We define the assignment $\K^0 \subseteq \C^0$ such that: for $p \in [l]$, $q,r \in [d]$, $i \in [0;m-1]$ and $j \in [d]$, $\K^0$ contains $(x_{p,q,r,i,j} = 1)$ iff $\Phi(i) = t_{p,q,r}$. We construct $\K^1$ and $\K^2$ as follows.
\begin{enumerate}
\item Local constraints. Fix $p \in [l]$ and $q,r \in [d]$. Let $i = \Phi^{-1}(t_{p,q,r})$. For each $i' \in [0;m-1]$ different from $i$, we add $C^3_{p,q,r,i'}$ to $\K^1$. For each $i' \in [0;m-1]$ such that ($i' > i$ and $i' \equiv i+1~(d)$) or ($i' < i$ and $i' \equiv i~(d)$), we add $C^4_{p,q,r,i'}$ to $\K^2$.
\item Global constraints. Fix $p \in [l]$ and $j \in [d]$. For each $q \in [d]$, we add $C^5_{p,q,t_{p,q},j}$ to $\K^1$, and we add $C^6_{p,t_p,j}$ to $\K^2$.
\end{enumerate}
We verify that this is indeed a solution of $I'$. First, we have $|\K^1| = d^2 l (m - 1) + d^2 l = d^2 l m = d k$. Second, we have $|\K^2| = d^2 l (m - 1) / d + d l = d l (m - 1) + d l = d l m = k$. Third, it is easily checked that Points (i)-(ii)-(iii) hold. \qed
\end{proof}

\begin{lemma} \label{lem10} If $I'$ is a positive instance, then $I$ is a positive instance.
\end{lemma}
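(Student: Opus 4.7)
The overall strategy is to read off the tuples $t_1,\ldots,t_l\in V^s$ of a matching in $S$ directly from the solution $(\K^0,\K^1,\K^2)$, following the intuition of Figure \ref{fig1}. The structural lever I will use throughout is the observation recorded right after the definition of \textsc{$d$-Constraint-Cover}: from $|\K^0|=d^2k$, $|\K^1|=dk$, and arity at most $d$ for each element of $\C^1$, the inequalities
\[
|\K^0| \;=\; \Big|\bigcup_{C\in\K^1}C\Big| \;\leq\; \sum_{C\in\K^1}|C| \;\leq\; d\cdot|\K^1| \;=\; d^2k
\]
collapse to equalities, so $\K^1$ partitions $\K^0$ and every element of $\K^1$ has arity exactly $d$. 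By Point (iii), any $1$-constraint of arity strictly less than $d$ is therefore \emph{unsatisfied} by $\K^0$; since $d\geq 3$, this rules out every $2$-ary incompatibility $C^1_{p,q,r,i,i',j,j'}$ and $C^2_{p,p',q,r,i,i',j,j'}$. The analogous counting applied to $\K^2$ gives that $\K^2$ partitions $\K^1$ into groups of size exactly $d$.

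The heart of the argument is a per-block rigidity step: I will show that for every block $X_{p,q,r}$ there is a unique index $i^{p,q,r}\in[0;m-1]$ with $(x_{p,q,r,i^{p,q,r},j}=1)\in\K^0$ for every $j\in[d]$ and all other variables of the block set to $0$. The upper bound \emph{at most one row carries a $1$} follows directly from the failure of the $C^1$ incompatibilities. For the lower bound, the only members of $\C^1$ containing any $0$-constraint are the $C^3$'s, so every row containing even one $0$ must have its full $C^3$ in $\K^1$, which via Point (i) forces the entire row to $0$; hence a row with any $1$ is all-$1$'s. A block with no 1-row therefore contributes $m$ distinct $C^3$'s to $\K^1$, while a block with exactly one 1-row contributes $m-1$. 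Since the $C^4$'s in $\K^2$ partition these $C^3$'s into cyclic $d$-runs, the count must be divisible by $d$; the padding congruence $m\equiv 1\pmod{d}$ rules out $m$, leaving exactly one 1-row per block.

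Having pinned down the $i^{p,q,r}$, define $t_{p,q,r}=\Phi(i^{p,q,r})\in V^c$, $t_{p,q}=t_{p,q,1}\cdots t_{p,q,d}\in V^{cd}$, and $t_p=t_{p,1}\cdots t_{p,d}\in V^s$. The 1-entries of block $X_{p,q,r}$ are covered only by $C^5_{p,q,t,j}$'s whose $r$-th factor is $t_{p,q,r}$, so for each $j$ the unique $C^5_{p,q,\cdot,j}$ in $\K^1$ is $C^5_{p,q,t_{p,q},j}$. Grouping these via Point (ii) forces $C^6_{p,t_p,j}\in\K^2$ for every $j$, and since $C^6$-constraints are introduced only for tuples in $S$, this yields $t_p\in S$. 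Pairwise orthogonality of the $t_p$'s comes from the failure of the $C^2$ incompatibilities: for $p\neq p'$ and any $(q,r,j,j')$, both $x_{p,q,r,i^{p,q,r},j}$ and $x_{p',q,r,i^{p',q,r},j'}$ are set to $1$, so the non-satisfaction of $C^2_{p,p',q,r,i^{p,q,r},i^{p',q,r},j,j'}$ forces $\Phi(i^{p,q,r})$ and $\Phi(i^{p',q,r})$ to be orthogonal in $V^c$; reassembling across $(q,r)$ gives orthogonality of $t_p$ and $t_{p'}$ in $V^s$. Hence $\{t_1,\ldots,t_l\}\subseteq S$ is a matching of size $l$.

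The main obstacle I anticipate is the block-rigidity step: upgrading the cheap $\leq 1$ bound from $C^1$ into the exact count $=1$ requires both the $C^3$-forces-whole-row-to-zero subargument and the divisibility trick $d\mid(m-1)$, which is precisely what motivates the padding congruence on $m$ at the start of the reduction. Everything afterwards -- reading tuples off the chosen rows, extracting $t_p\in S$ from the $C^6$ partition, and upgrading $C^2$ non-satisfaction into pairwise orthogonality -- is essentially bookkeeping on top of this rigidity.
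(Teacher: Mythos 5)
Your proof is correct and follows essentially the same route as the paper's, which is organized into the same three steps (its Claims 1--3): per-block rigidity obtained from the arity-$2$ incompatibility constraints, the fact that only the $C^3$'s contain $0$-constraints, and the padding $m\equiv 1\pmod d$ combined with the $C^4$-partition of $\K^1$; then identification of the constraints $C^5_{p,q,t_{p,q},j}$ in $\K^1$; then extraction of $C^6_{p,t_p,j}\in\K^2$ via Point (ii) and pairwise orthogonality from the $C^2$ constraints. The one place you diverge is the identification step: the paper pins down the $C^5$'s by a global count ($|\K^1|=d^2lm$, of which $d^2l(m-1)$ are forced $C^3$'s, leaving exactly one slot per triple $(p,q,j)$), whereas you argue locally; your local argument is valid and a bit more economical, but as written the phrase ``covered only by $C^5_{p,q,t,j}$'s whose $r$-th factor is $t_{p,q,r}$'' only gives existence -- for uniqueness you should say explicitly that any $C^5_{p,q,t,j}\in\K^1$ is a subset of $\K^0$ (by Point (i)), so block rigidity applied at every position $r$ forces $t=t_{p,q}$. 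Finally, like the paper's own proof of its Claim 3, you read the construction as introducing $C^6_{p,t,j}$ only for $t\in S$ (the text says $t\in V^s$, but the restriction to $S$ is clearly the intended construction, and it is what makes the conclusion $t_p\in S$ available).
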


\begin{proof} Suppose that we have a solution of $I'$, consisting of assignment $\K^0 \subseteq \C^0$, of $\K^1 \subseteq \C^1$ and of $\K^2 \subseteq \C^2$. We then have $|\K^1| = d k = d^2 l m$ and $|\K^2| = k = d l m$. For $p \in [l]$ and $q,r \in [d]$, we let $Z_{p,q,r}$ be the set of pairs $(i,j)$ ($i \in [0;m-1]$, $j \in [d]$) such that $\K^0$ contains $(x_{p,q,r,i,j} = 1)$.\\

\noindent \textit{Claim 1.} For each $p \in [l]$, $q,r \in [d]$, we have some $i \in [0;m-1]$ such that $Z_{p,q,r} = \{i\} \times [d]$. 

\noindent \textit{Proof.} We first show that we cannot have $Z_{p,q,r} = \emptyset$. Suppose the contrary. Let $\K'$ be the set of constraints $C^3_{p,q,r,i}$ ($i \in [0;m-1]$). By assumption, the constraints $(x_{p,q,r,i,j} = 0)$ are in $\K^0$, and by Point (i) we must have $\K' \subseteq \K^1$. By Point (ii), it follows that there is a set of indices $I \subseteq [0;m-1]$ such that $\K^2$ contains constraints $C^4_{p,q,r,i}$ ($i \in I$). We then have $\K' = \cup_{i \in I} C^4_{p,q,r,i}$, and since this is a disjoint union, we obtain $|\K'| = d |I| < m$, contradiction.

We now show that we cannot have two tuples $(i,j), (i',j') \in Z_{p,q,r}$ with $i \neq i'$. Suppose the contrary. It follows that $\K^0$ contains $(x_{p,q,r,i,j} = 1)$ and $(x_{p,q,r,i',j'} = 1)$, hence we have $C^1_{p,q,r,i,i',j,j'}$ satisfied by $\K^0$. This contradicts Point (iii) since the constraints of $\C^1$ satisfied by $\K^0$ have arity $d > 2$.

From the previous points, it follows that there is $i \in [0;m-1]$ such that $Z_{p,q,r} \subseteq \{i\} \times [d]$. We show that the inclusion is an equality. Suppose the contrary, we then find $j,j' \in [d]$ such that $\K^0$ contains $(x_{p,q,r,i,j} = 0)$ and $(x_{p,q,r,i,j'} = 1)$. By Point (i), the constraint $(x_{p,q,r,i,j} = 0)$ can only be covered by a constraint of the form $C = C^3_{p,q,r,i}$. Since $\K^0$ does not contain $(x_{p,q,r,i,j'} = 0)$, we have $C \nsubseteq \K^0$, contradiction. \qed
~\\

For each $p \in [l]$ and $q,r \in [d]$, the previous observation then yields an integer $i \in [0;m-1]$ such that $Z_{p,q,r} = \{i\} \times [d]$; we then define the tuple $t_{p,q,r} \in V^c$ such that $\Phi(i) = t_{p,q,r}$. For $p \in [l]$ and $q \in [d]$, we define  $t_{p,q} \in V^{c d}$ such that $t_{p,q} = t_{p,q,1} t_{p,q,2} \ldots t_{p,q,d}$. For each $p \in [l]$, we define $t_p \in V^{s}$ such that $t_p = t_{p,1} t_{p,2} \ldots t_{p,d}$. \\

\noindent \textit{Claim 2.} $\K^1$ contains: (a) the constraint $C^3_{p,q,r,i}$ for each $p \in [l]$, $q,r \in [d]$, $i \neq \Phi^{-1}(t_{p,q,r})$, (b) the constraint $C^5_{p,q,t_{p,q},j}$ for each $p \in [l]$, $q \in [d]$, $j \in [d]$, (c) no other constraint.

\noindent \textit{Proof.} Point (a). Consider $p \in [l], q,r \in [d], c \in [p]$ and $i \neq \Phi^{-1}(t_{p,q,r})$. By definition, $\K^0$ contains $(x_{p,q,r,i,j} = 0)$ for each $j \in [d]$. By Point (i), the only possibility to cover these $d$ constraints is to have $C^3_{p,q,r,i}$ in $\K^1$.

Point (b). Let $\K'$ denote the set of constraints in $\K^1$ not considered in (a). We have $|\K'| = |\K^1| - d^2 l (m - 1) = d^2 l$. For each $p \in [l]$ and $q,j \in [d]$, let $\K'_{p,q,j}$ denote the set of constraints in $\K'$ of the form $C ^5_{p,q,t,j}$. Fix $p \in [l]$ and $q,j \in [d]$, and for $r \in [d]$ let $i_r = \Phi^{-1}(t_{p,q,r})$. By Claim 1, $\K^0$ contains $(x_{p,q,r,i_r,j} = 1)$ for each $r \in [d]$. Each such constraint must be covered by a constraint of $\K'_{p,q,j}$. Since $|\K'| = d^2 l$, it follows that $|\K'_{p,q,j}| = 1$, and thus we find $t'_{p,q,j} \in V^s$ such that $\K'_{p,q,j} = \{C^5_{p,q,t'_{p,q,j},j}\}$. The definition of this constraint ensures that $t'_{p,q,j} = t_{p,q}$ for each $j \in [d]$.

Point (c) follows from the fact that $|\K^1| = d^2 l m$. \qed
~\\

Let $M = \{ t_1,\ldots,t_l \}$. Since each $t_p$ is in $V^s$ by construction, we have $M \subseteq V^s$. It remains to show that $M$ is the desired solution of instance $I$.\\

\noindent \textit{Claim 3.} $M$ is a matching included in $S$.

\noindent \textit{Proof.} We first show that: for each $p \in [l], j \in [d]$, if $\K^2$ contains $C^6_{p,t',j}$ then $t' = t_p$. Suppose that $\K^2$ contains $C^5_{p,t',j}$ with $t' \in V^s$, and suppose that $t' = t'_1 \ldots t'_d$ where $t'_1, \ldots, t'_d \in V^{c d}$. By Point (ii), we have $C^5_{p,q,t'_q,j} \in \K^1$ for each $q \in [d]$. By Claim 2, we obtain that $t'_q = t_{p,q}$ for each $q \in [d]$. It follows that $t' = t_p$.

We show that $M$ is a matching included in $S$. Fix $p \in [l]$. Since the constraints $C^5_{p,q,t_{p,q},j}$ are in $\K^1$ (by Claim 2), and by the previous reasoning, we obtain that $\K^2$ contains $C^6_{p,t_p,j}$. It follows that $t_p \in S$ for each $p \in [l]$, and thus $M \subseteq S$. We now prove that the elements of $M$ are pairwise orthogonal. Suppose by contradiction that there exists $p,p' \in [l]$ distinct such that $t_{p}, t_{p'}$ are not orthogonal. We then find $q,r \in [d]$ such that $t_{p,q,r}$ and $t_{p',q,r}$ are not orthogonal. Let $i = \Phi(t_{p,q,r})$ and $i' = \Phi(t_{p',q,r})$. We obtain that $\K^0$ contains $(x_{p,q,r,i,1} = 1)$ and $(x_{p',q,r,i',1} = 1)$. This implies that the constraint $C^2_{p,p',q,r,i,i',1,1}$ is satisfied by $\K^0$. This contradicts Point (iii) since the constraints of $\C^1$ satisfied by $\K^0$ have arity $d > 2$. \qed
\end{proof}

We are now ready to finish the proof of Theorem \ref{thm6}. We obtain a compression of \textsc{$s$-Dimensional Matching} as follows. Given the instance $I = (V,S,l)$, we construct the instance $I'$ of \textsc{$d$-Constraint-Cover} as above; the instances $I$ and $I'$ are equivalent by Lemmas \ref{lem9} and \ref{lem10}. By the assumption, we may compress $I'$ to an instance $I''$ of size $O(k^{d^2-\epsilon})$. Since $k = d l m \leq d n^{c+1}$, it follows that $I''$ has size $O(n^{(c+1)(d^2 - \epsilon)}) = O(n^{s + d^2 - (c+1) \epsilon})$. Since $d^2 - (c+1) \epsilon < 0$ by choice of $c$, we have compressed instance $I$ to an instance $I''$ of size $O(|V|^{s - \epsilon'})$. By Theorem \ref{thm8}, this implies $\conp \subseteq \np / poly$.

\end{document}